\documentclass[10pt,twocolumn,twoside]{IEEEtran}
\usepackage{graphicx}
\usepackage[cmex10]{amsmath}

\usepackage{amssymb}
\usepackage{array}

\usepackage{amsfonts}

\usepackage{color}

\usepackage{setspace}

\usepackage{amsmath,amsfonts}
\usepackage{amssymb,bbm,bm}
\usepackage{graphicx}
\newcommand{\EXP}[1]{\mathsf{E}\left[ #1 \right]}
\newtheorem{lemma}{\textbf{Lemma}}

\newtheorem{theorem}{\textbf{Theorem}}
\newtheorem{remark}{\textbf{Remark}}

\newcommand{\Reals}{{\mathbb{R}}}

\begin{document}
%\begin{frontmatter}

\title{Revenue Maximization in Service Systems with Heterogeneous Customers}

\author{
  \begin{tabular}{ccc}
    Tejas Bodas   & D.~Manjunath \\
    IIT Bombay, INDIA & IIT Bombay, INDIA 
  \end{tabular}
}

\maketitle{}

\begin{abstract}
In this paper, we consider revenue maximization problem 
for a two server system in the presence of heterogeneous
customers. We assume that the customers differ in their cost
for unit delay and this is modeled as a continuous random
variable with a distribution $F.$ We also assume that each
server charges an admission price to each customer that
decide to join its queue. We first consider the monopoly
problem where both the servers
belong to a single operator. The heterogeneity of the customer
makes the analysis of the problem difficult. The difficulty 
lies in the inability to characterize the equilibrium queue 
arrival rates as a function of the admission prices. We 
provide an equivalent formulation with the queue arrival rates
as the optimization variable simplifying the 
analysis for revenue rate maximization for the monopoly. 
We then consider the duopoly problem where each server 
competes with the other server to maximize its revenue rate.
For the duopoly problem, the interest is to obtain the set of 
admission prices satisfying the Nash equilibrium conditions.  
While the problem is in general difficult to analyze, we 
consider the special case when the two servers are identical.
For such a duopoly system, we obtain the necessary condition 
for existence of symmetric Nash equilibrium of the admission prices.
The knowledge of the distribution $F$ characterizing the 
heterogeneity of the customers is necessary to solve the monopoly
and the duopoly problem. However, for most practical 
scenarios, the functional form of $F$ may not be known
to the system operator and in such cases, the
revenue maximizing prices cannot be determined. In the last 
part of the paper, we provide a simple method to estimate the 
distribution $F$ by suitably varying the admission prices.  
We illustrate the method with some numerical examples.
 \end{abstract}

%\begin{keyword}
 
%\end{keyword}

%\end{frontmatter}
% \chapter{Revenue Maximization in parallel server systems}

\section{Introduction}
\label{sec:intro}

In many service systems, the quality of service received
is characterized by the queueing delay that is experienced
by the customers in the system.  Examples of such service 
systems that can be modeled as queueing systems include 
road and transport systems, health-care systems, computer 
systems, call centers and communications systems. The 
customers that receive service in such systems are usually sensitive
to the delay experienced in these system. Further, such 
customers have non-identical preferences to the delay 
experienced. It is often beneficial for the service 
system to account for these heterogeneous preferences in 
any optimization concerning the use of system resources.
Many service systems have emerged that exploit the 
heterogeneous nature of customers and use it to their advantage.
For example, airlines offer priority boarding queues for payment
of an additional fee. In this paper, we consider the 
problem of exploiting the heterogeneous nature of customers
for revenue maximization in parallel server systems.
We model heterogeneity of customers by assuming 
that different customers have different cost for a unit delay.
% This cost for a customer is characterized by a continuous random 
% variable $\bm{\beta}$ with distribution $F(\cdot).$

We consider service systems that consist of two parallel,
possibly heterogeneous servers where each server has 
an associated queue for the customers to wait. The scheduling
discipline at each server is work conserving and does 
not discriminate between customers on the basis of their
preference for delay. The servers charge an admission price 
to every customer joining its queue. We assume that 
the queues are not observable and only the expected 
delay as a function of the arrival rate is available.
We also assume that the expected delay at any server is monotone 
increasing in the arrival rate of customers to that server.
The customers that use the system are strategic and make an
individually optimal queue-join decision. We assume that 
customers differ in their cost for unit delay which is 
characterized by a random variable  with a continuous 
distribution denoted by $F.$ 
% Each customer has
% an associated delay cost parameter that determines its cost
% for unit delay.
For a customer, the cost at a server is the sum
of the admission price and the delay cost at the server.
We assume that customers cannot balk from the system
without obtaining service and such traffic is commonly
seen in cloud-computing, purchase of essential services etc.
% Once the prices at the servers are announced, the strategic
% customers will make a choice of the server to obtain service. 
% The choice of server made by the customers at equilibrium
% in such models was first analyzed by Wardrop in the context of 
% routing in road networks \cite{Wardrop52}. We shall 
% assume throughout that by equilibrium, we mean Wardrop 
% equilibrium and this notion would be made precise in the 
% coming section.

In this paper, we consider the problem of revenue maximization 
in such a service system by suitably choosing the admission 
prices at two parallel servers. Depending on  the objective
of each of these servers, we consider two natural scenarios.
In a monopoly, we assume that the two servers belong to the 
same operator. The objective here is
to maximize the total revenue rate, i.e., the sum of the revenue rate
from the two servers. In the second scenario, we assume that each 
server belongs to separate operators and each server has the objective of
maximizing its individual revenue rate. This is an example of a
duopoly where the service systems compete with one
another to maximize their individual revenue rate.

Now consider the scenario of a monopoly market discussed above
where the service system has two parallel servers.
% Each server has an
% associated queue for customers to wait for service and an admission price is 
% charged to each customer joining the queue.
In the absence
of balking, it is not difficult to see that a revenue maximizing strategy for 
the monopoly is to keep both the admission prices at infinity. This is because
as customers cannot balk, they are required to choose one of the server for 
service. Therefore one has to consider a more meaningful
model for the monopoly market. Towards this, we assume that the
admission price at one of the server, say Server~2  is 
fixed a-priori. This dissuades the service provider from fixing the admission
price at Server~1 to unreasonably high values. Our interest for this 
model is to characterize the revenue
maximizing admission price at Server~1 for different examples of the delay 
functions at the queue and when customers differ in their delay cost.

Classical monopoly models have been well studied for the case of  
single server queues. One of the first work to analyze such
a model is Naor \cite{Naor69}. This model considers
a single server queueing system where homogeneous customers 
obtain a reward after service completion. The queue is 
observable to arriving customers who choose to either join the 
queue or balk. For such a system, the revenue maximizing 
admission price was first obtained in \cite{Naor69}. Subsequently,
there have been several works analyzing the revenue 
maximization problem for various models such as a
multiserver queue \cite{Knudsen72},
$GI/M/1$ queue \cite{Yechiali71}, customers
with heterogeneous service valuations \cite{Larsen98}
and queue length dependent prices  
\cite{Chen01}. While the above models assume that the queue 
lengths are observable, Edelson and Hilderbrand \cite{Edelson75} 
were the first to consider the revenue maximization problem for 
the case  when queues are not observable. 
% A generalization of \cite{Edelson75} for the case 
% of nonlinear preferences is available in \cite{Chen04}.
See \cite{Mendelson85,Mendelson90,Bradford96,Masuda99,Chen04}
for some other single server revenue maximization models.

The key difference of our model with that of the literature 
discussed above is as follows. Firstly, in our model, customers are 
inelastic in their demand and hence balking is not allowed.
Secondly, the customers
have to obtain service at either of the two servers and
the admission price at one of the server is fixed. 
Finally, the customers have heterogeneous preference for the
delay experienced in the queue. This feature 
makes our model meaningful but also difficult to analyze.
For such parallel server models, the structural 
properties for the equilibrium routing have been obtained
recently \cite{Bodas11b,Bodas14}. We use the structural property of 
the equilibrium routing to solve the the revenue maximization 
problem for the monopoly. 
%in Section \ref{sec:monopoly}.
% As in case of 
% \cite{Bodas11a,Bodas14}, we make 
% minimal assumptions on the distribution $F(\cdot)$ and on the
% delay cost function $D_j(\cdot)$ at queue $j$. 
% Our analysis for the problem holds for a wide class of $F(\cdot)$
% and $D_j(\cdot)$ functions.
% In our main result for the monopoly, we determine the
% admission price that should be charged at the first server such that
% the revenue rate from the system is maximized.

For the duopoly problem with two competing and identical
servers, we assume that the objective for each server is to set an 
admission price that maximizes its revenue rate.  
We are interested in studying the existence of Nash
Equilibrium prices that would be set by the two servers.
% For the duopoly model, there have been several works analyzing the
% Nash equilibrium prices that maximize 
% the servers' revenue. 
The earliest work analyzing the duopoly model
with heterogeneous customers was by Luski \cite{Luski76} and Levhari and
Luski \cite{Levhari78}. Both the models assume that the customers are allowed to balk.
Luski \cite{Luski76}
is interested in knowing whether the revenue maximizing prices set by the two service 
systems can be equal. It is observed that when the parameters of the model
are such that the customers have no incentive to balk, the revenue maximizing 
prices set by two identical servers is equal. This is however not the case when
some of the customers prefer to balk. In this case, the equilibrium revenue
maximizing prices are not equal. Levhari and Luski \cite{Levhari78} provide
a numerical analysis for the problem introduced in Luski \cite{Luski76}. 
% A numerical plot of the best response admission price for each server is obtained
% as a function of the price at the other server. Based on the intersection of
% the best response curves for different parameter values, several cases are 
% identified where the equilibrium prices are either equal to or different from 
% each other.
Armony and Haviv \cite{Armony03} analyze this problem for the 
case when the customers are from a finite number of classes and each 
class has a distinct cost for unit delay. A numerical 
analysis of the Nash equilibrium admission prices between the two competing
servers is provided.
% and examples are provided to show that the set of Nash
% equilibrium prices is either unique, multiple or non-existent. 
Chen and Wan \cite{Chen03} consider the revenue maximization in a duopoly 
with a single customer class. The service system is modeled by  $M/M/1$ 
queues and the customers are allowed to balk from the system. These 
assumptions on the system model allows them to obtain the sufficient 
conditions for the existence of Nash equilibrium. Similar conditions were 
found in Dube and Jain \cite{Dube08} who consider 
an $N$-player oligopoly with multiclass customers. The customer classes
differ only in their arrival rates and have the same delay cost per
unit time. A differentiated service model
is considered by Dube and Jain \cite{Dube10} where each player now operates two types 
of services and each service is used by a dedicated class of customers.
Again, the key result in \cite{Dube10} is to obtain the sufficient condition
for the Nash equilibrium prices. Mandjes and Timmers \cite{Mandjes07}
consider a duopoly model with two customer classes differing in their 
delay cost. The model assumes a finite number of customers 
and the utility of a queue is a decreasing function of the number of 
customers using this server. Given the prices at the servers, they provide
an algorithm that determines the equilibrium number of customers of each
class that is to be allocated to the two servers. While the existence 
and uniqueness of such a customer equilibrium is provided, the existence
of Nash equilibrium prices is only conjectured. In \cite{Allon08,Allon07}
the demand rate at different servers is modeled using specific functions
(known as demand models in such literature) instead of being calculated
from the (Wardrop) equilibrium conditions \cite{Wardrop52}. This assumptions make the 
analysis relatively simpler. Ayesta et. al. \cite{Ayesta11} consider
the oligopoly pricing game for a
single customer class and obtain the necessary and sufficient conditions on the Nash 
equilibrium prices when the queues have identical delay functions. 
A best-response algorithm is then provided to numerically obtain these
Nash equilibrium prices.
% They also establish non-existence of
% Nash equilibria when the system is under-provisioned, i.e., the capacity
% is not sufficient to handle the entire customer population within finite
% delay guarantees.

% A key bottleneck in the revenue maximization problem for the monopoly and
% the  duopoly is the characterization of the equilibrium routing with
% heterogeneous customers.
Most of the monopoly and duopoly models described above, 
make simplifying assumptions on the customer classes to characterize the underlying
Wardrop equilibrium \cite{Wardrop52}. Additional simplification of the  
analysis is obtained by considering convex and increasing delay
functions at the queues. We do not make any of these assumptions in 
this paper. We utilize the structure of the Wardrop equilibrium 
that was characterized in \cite{Bodas11b,Bodas14} to analyze the two 
problems. This structure on the equilibrium allows us to provide an 
equivalent revenue maximization formulation for both the monopoly
and the duopoly that is simpler to analyze. For the duopoly problem we 
provide  sufficient conditions on the symmetric 
Nash equilibrium prices when the competing servers are identical. 

For most practical scenarios, the distribution function $F(\cdot)$ 
characterizing the delay cost for a customer  may 
not be known to the service system. The revenue maximizing 
strategy on the other hand depend on the distribution $F(\cdot).$ Without
any knowledge of $F(\cdot),$  it is  not be possible to ascertain a revenue optimal 
admission price at the servers and in such cases, the service system is 
required estimate this distribution function. Towards the end of this paper,
we shall provide a simple method to estimate this distribution $F(\cdot)$
by varying the admission prices and observing the change in the equilibrium
traffic routing. The service system can then use this estimate to perform 
the necessary revenue maximization.

The rest of the paper is organized as follows. In the next section, we shall
formalize the notations and provide some preliminaries. We then formulate the revenue
maximization problems in Section \ref{sec:formulation}. 
In Section \ref{sec:monopoly}, we consider the monopoly problem for
revenue optimization followed by the duopoly problem in Section \ref{sec:duopoly}.
Finally in Section \ref{sec:estimate_F}, we illustrate a mechanism based on 
admission pricing to estimate the distribution function $F.$

\section{Preliminaries}
\label{sec:prelim_2}
We will first introduce the notations that will be used throughout 
this paper. In both the monopoly and
the duopoly model, we assume that the system has two servers.
Let $c_j$ denote the admission price at Server~$j$ where 
$ j = 1, 2.$ The customers arrive according to a homogeneous
Poisson process with rate $\lambda$ and have a service requirement
that is i.i.d with exponential distribution and unit mean.
Let $D_j(\gamma_j)$ denote the delay function
associated with queue~$j$ when the queue arrival rate is 
$\gamma_j,$ where $j = 1, 2.$ Note that $\gamma_1 + \gamma_2 = \lambda.$
We assume that $D_j$ is monotone
increasing and continuously differentiable in the interior of
its domain with a strictly positive derivative. Additionally
we assume that the cost function at the two server satisfies
the following two conditions (1) $D_1(0)<D_2(\lambda)<\infty$
and (2)  $D_2(0)<D_1(\lambda)<\infty.$ 

We associate with each arriving customer a continuous random variable $\bm{\beta}$ 
that quantifies a customer's sensitivity to delay or congestion.
We shall assume that the delay sensitivity ${\beta}$ for a customer 
is a realization of the random variable $\bm{\beta}.$ The customer arrivals
constitute a marked Poisson process of intensity $\lambda \times F$ on
$\Reals \times \Reals_+.$ Here $F$ is an absolutely continuous 
cumulative distribution function supported on the interval $[a,b]$ 
of positive reals.
% For certain technicality that shall be
% made precise in Section \ref{sec:monopoly},
We additionally assume that $F(\cdot)$
is strictly increasing  and hence $f(x) \neq 0$ 
for any $x \in [a,b]$ where $f(\cdot)$ is the corresponding 
density function.

We now recall the Wardrop equilibrium conditions \cite{Wardrop52,Bodas14}
that characterize the individually optimal choice of server made by the
arriving customers. A customer with delay cost $\beta$ entering the system
must choose a queue $j$ so as to minimize $c_j+\beta D_j(\gamma_j).$
Here $\gamma_j$ is determined through the strategies of all customers.
We assume that the quantities $\lambda_1, \lambda_2, D_j(\cdot), F(\cdot)$ and $c_j,$
for $j=1,2$ is part of common knowledge. We also assume
that the customers do not have access to current or past queue 
occupancies, or the history of arrival times. The strategy of a customer
is restricted to choosing a server according to a fixed probability 
distribution and such joint strategies are represented by a stochastic
kernel, denoted by $K^W.$ We interpret $K^W(\beta,i)$ as the probability
that a customer with delay sensitivity $\beta$ chooses queue $i$ at
equilibrium. For the two server system, the equilibrium kernel $K^W$ must
satisfy the following Wardrop equilibrium conditions.
\begin{equation}
\label{eq:wardrop_cond}
 K^W(\beta,i) \geq 0 \mbox{~implies~} c_i+\beta D_i(\gamma_i) \leq c_{3-i}+\beta D_{3-i}(\gamma_{3-i}).
\end{equation}
In words, this means that if customers with delay cost $\beta$ choose 
Server~$i$ at equilibrium, then the expected cost for this customer 
at Server~$i$ must be at most the expected cost at Server~$3-i$ for 
$i = 1,2.$ 
For a kernel $K^W,$ note that the arrival rate of customers to 
Server~$j$  is given by
$$
\gamma_j = \lambda \int_{\beta=a}^{b} K^W(\beta,j) dF(\beta).
$$
%with $\gamma_1 + \gamma_2 = \lambda.$
% 
% Let $\theta$ denote the collection
% of the system parameter  ,i.e.,  $$\theta = \left\lbrace c_1, c_2, F(\cdot), 
% D_j(\cdot), j=1,2. \right\rbrace.$$
% By $D_j(\cdot), j=1,2$ we imply the functional
% form of the delay cost $D_j$ for $j  =1,2.$

We now provide the following theorem that is a restatement of 
Corollary 4 in \cite{Bodas14}. This theorem characterizes
 the Wardrop equilibrium kernel for a system with two parallel servers. 
\begin{figure*}
  \begin{minipage}{6cm}
    \begin{center}
    \vspace{-3mm}
      \includegraphics[width=.45\columnwidth, height=6cm]{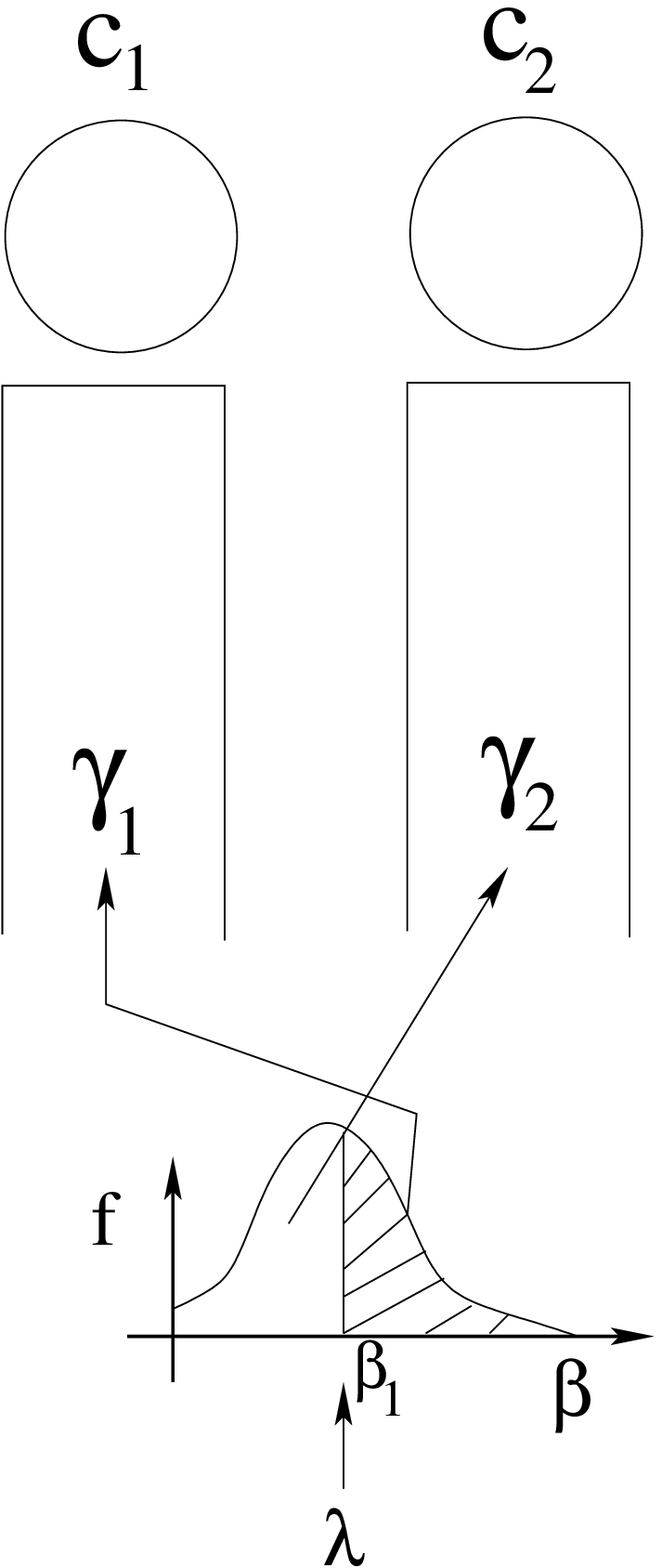}
      %\vspace{-2mm}
      \caption{Representation of $K^W$ when $c_1 > c_2.$}  
      \label{fig:c1gc2}
    \end{center}
  \end{minipage}
  \hspace{0.8cm}
  \begin{minipage}{6cm}
    \begin{center}
      \includegraphics[width=.45\columnwidth, height=6cm]{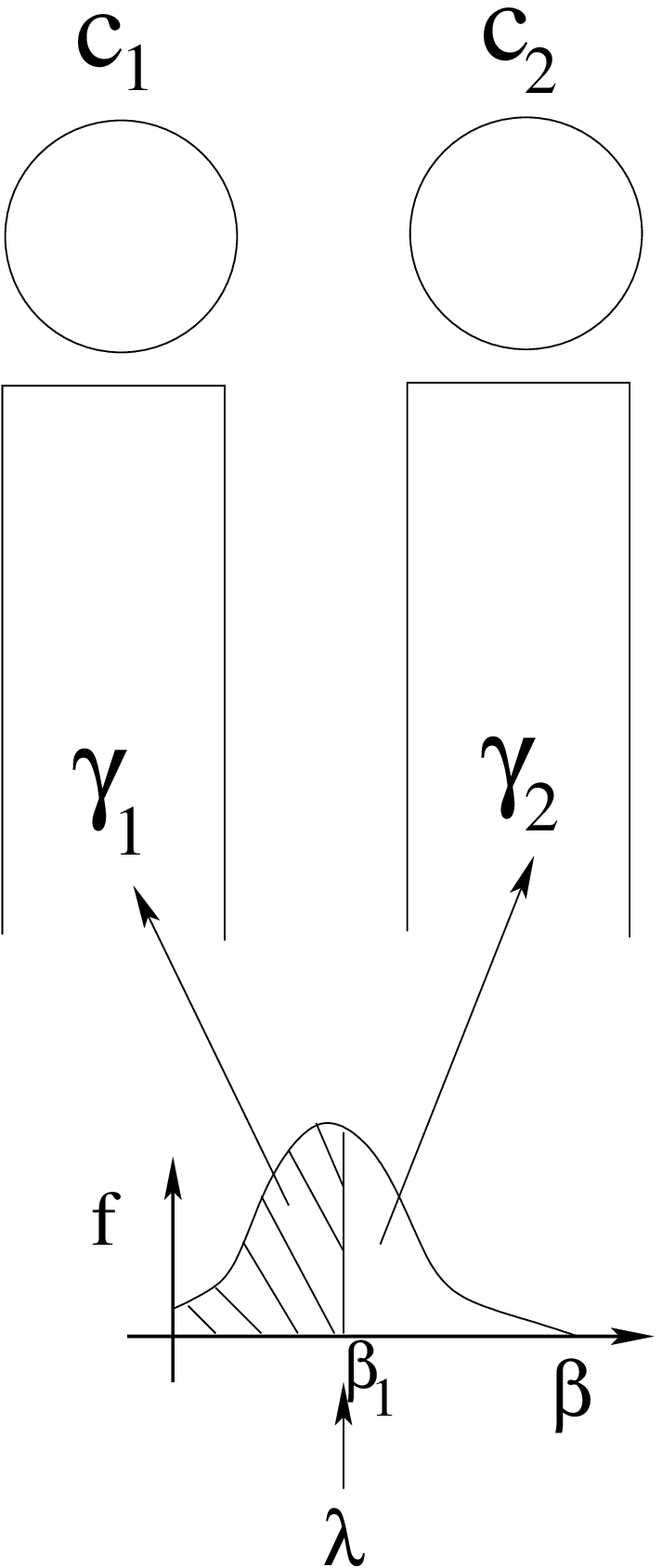}
      \caption{Representation of $K^W$ when $c_1 < c_2.$}
      \label{fig:c1lc2}
    \end{center}
  \end{minipage}
\end{figure*} 
\begin{theorem} 
  \label{thm:wardrop-cts} 
  Define $\delta_i$ as the probability distribution that puts unit mass
  on $i$ and suppose that the kernel $K^W$ satisfies the Wardrop equilibrium
  condition. Then there exists a threshold $\beta_1$ with
  $\beta_1 \in [a,b]$ such that
\begin{itemize}
 \item when $c_1 > c_2~ (\mbox{resp.~} c_1 < c_2),$ 
\begin{eqnarray}
\label{eq:wardrop-cts}
   K^W(\beta,\cdot) & =  & \begin{cases} 
      \delta_1~ (\mbox{resp.~} \delta_2)  & \mbox{ for } \beta \in (\beta_1,b],\\
      \delta_2~ (\mbox{resp.~} \delta_1) & \mbox{ for } \beta \in [a,\beta_1].
          \end{cases} 
\end{eqnarray}
Further if $\beta_1 \in (a,b)$ then, 
\begin{equation}
\label{eq:beta_1}
c_1 + \beta_1 D_1(\gamma_1) = c_2 + \beta_1 D_2 (\gamma_2).
\end{equation}

\item When $c_1 = c_2,$ $K^W$ is not unique and any kernel 
${K}$ with $\gamma_1 = \gamma^+$ is a valid Wardrop equilibrium 
kernel $K^W$ where $\gamma^+:= \left\lbrace \gamma_1 : D_1(\gamma_1) = D_2(\gamma_2)
\right\rbrace.$ 
\end{itemize}
\end{theorem}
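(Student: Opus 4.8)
The plan is to exploit the fact that, once the equilibrium arrival rates $\gamma_1,\gamma_2$ are fixed, the quantities $D_1(\gamma_1)$ and $D_2(\gamma_2)$ are constants, so the difference between the cost a tagged customer incurs at the two servers,
$$
g(\beta) := \bigl[c_1 + \beta D_1(\gamma_1)\bigr] - \bigl[c_2 + \beta D_2(\gamma_2)\bigr] = (c_1-c_2) + \beta\bigl[D_1(\gamma_1)-D_2(\gamma_2)\bigr],
$$
is an affine function of the delay sensitivity $\beta$. The Wardrop condition \eqref{eq:wardrop_cond} then says that a customer can place mass on Server~1 only where $g(\beta)\le 0$ and on Server~2 only where $g(\beta)\ge 0$; equivalently, $g(\beta)>0$ forces $\delta_2$ and $g(\beta)<0$ forces $\delta_1$. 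Since an affine function changes sign at most once, the threshold structure is immediate as soon as we pin down the sign of the slope $D_1(\gamma_1)-D_2(\gamma_2)$.

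First I would fix this sign in the case $c_1>c_2$, where the claim is $D_1(\gamma_1)<D_2(\gamma_2)$. I would argue by contradiction: if $D_1(\gamma_1)\ge D_2(\gamma_2)$, then $g(\beta)\ge c_1-c_2>0$ for every $\beta\in[a,b]\subset\Reals_+$, so \eqref{eq:wardrop_cond} forbids any customer from using Server~1, forcing $\gamma_1=0$ and $\gamma_2=\lambda$. But then the standing assumption $D_1(0)<D_2(\lambda)$ gives $D_1(\gamma_1)=D_1(0)<D_2(\lambda)=D_2(\gamma_2)$, contradicting $D_1(\gamma_1)\ge D_2(\gamma_2)$. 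Hence the slope is strictly negative and $g$ is strictly decreasing.

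With a strictly negative slope, $g$ has a unique real root $\beta^\ast=(c_1-c_2)/\bigl(D_2(\gamma_2)-D_1(\gamma_1)\bigr)>0$, and I would take the effective threshold to be $\beta_1=\min\{b,\max\{a,\beta^\ast\}\}$. For $\beta>\beta_1$ one has $g(\beta)<0$, so \eqref{eq:wardrop_cond} forces $\delta_1$; for $\beta<\beta_1$ one has $g(\beta)>0$, forcing $\delta_2$; the single point $\beta=\beta_1$ carries no mass because $F$ is absolutely continuous, so it may be assigned to $\delta_2$ without affecting the arrival rates. If $\beta_1\in(a,b)$ then $\beta^\ast$ lies strictly inside $[a,b]$ and $g(\beta_1)=0$, which is exactly the indifference relation \eqref{eq:beta_1}. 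The case $c_1<c_2$ follows symmetrically after interchanging the two servers and invoking the standing assumption $D_2(0)<D_1(\lambda)$ in place of the previous one.

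For $c_1=c_2$ the function reduces to $g(\beta)=\beta\,[D_1(\gamma_1)-D_2(\gamma_2)]$. If $D_1(\gamma_1)\neq D_2(\gamma_2)$, the same contradiction argument forces all customers onto one server and violates the corresponding boundary assumption, so $D_1(\gamma_1)=D_2(\gamma_2)$ at equilibrium. That an equalizing split exists and is unique follows from the intermediate value theorem applied to $h(\gamma_1):=D_1(\gamma_1)-D_2(\lambda-\gamma_1)$, which is continuous, strictly increasing, negative at $\gamma_1=0$ and positive at $\gamma_1=\lambda$; its unique root is $\gamma^+$. Once $D_1(\gamma_1)=D_2(\gamma_2)$ we have $g\equiv 0$ and every customer is indifferent, so any kernel producing $\gamma_1=\gamma^+$ satisfies \eqref{eq:wardrop_cond}, which is exactly the claimed non-uniqueness. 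I expect the main obstacle to be the clean handling of the interplay between $K^W$ and the rates $\gamma_j$: the rates are themselves determined by the kernel whose structure is being characterized, so one must treat $\gamma_1,\gamma_2$ as fixed constants at equilibrium and deduce the sign of the slope purely from the boundary conditions rather than from any a priori knowledge of the rates.
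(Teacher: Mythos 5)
Your argument is correct, but it is genuinely more self-contained than what the paper does. For the first bullet (the threshold structure when $c_1 \neq c_2$), the paper offers no proof at all: it simply declares the statement to be a restatement of Corollary~4 of an earlier reference and moves on. You instead derive it directly by observing that, with the equilibrium rates $\gamma_1,\gamma_2$ held fixed, the cost difference $(c_1-c_2)+\beta\bigl[D_1(\gamma_1)-D_2(\gamma_2)\bigr]$ is affine in $\beta$, pinning the sign of its slope by contradiction via the boundary assumption $D_1(0)<D_2(\lambda)$ (and its mirror image), and then reading off the single sign change as the threshold $\beta_1$, with the clamping to $[a,b]$ handling the degenerate all-to-one-server cases and the root condition giving exactly \eqref{eq:beta_1}. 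This buys a proof that stands on its own within the paper and makes explicit why the boundary conditions on $D_1,D_2$ are needed, at the cost of re-deriving material the authors treat as known. For the second bullet ($c_1=c_2$), your argument coincides with the paper's: both establish $D_1(\gamma_1)=D_2(\gamma_2)$ by an incentive-to-deviate contradiction, both use the boundary assumptions to place $\gamma^+$ strictly inside $(0,\lambda)$ (you make the intermediate-value-theorem step explicit, which the paper leaves implicit), and both conclude that universal indifference makes any kernel with $\gamma_1=\gamma^+$ an equilibrium. Your closing remark about treating $\gamma_1,\gamma_2$ as fixed at equilibrium and deducing the slope's sign only from the boundary conditions is precisely the right way to avoid circularity, and nothing in the argument fails.
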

Refer Figures \ref{fig:c1gc2} and \ref{fig:c1lc2} for a representation of the 
Wardrop equilibrium kernel for the case when $c_1 > c_2$ and $c_1 < c_2$ respectively. 
Here $f(\cdot)$ denotes the underlying density function of the random variable 
$\bm{\beta}$ while the shaded region identifies the delay cost parameter of those
 customers that choose Server~1. 

 \begin{proof}
The first part is simply a restatement of 
Corollary 4 in \cite{Bodas14} for the case $c_1 > c_2$ and the proof
for $c_1< c_2$ is along similar lines. We now prove the second part.
Consider the case when $c_1 = c_2$ and recall the assumption
that $D_1(0) < D_2(\lambda)$ and $D_2(0) < D_1(\lambda).$
$K^W$ must be such that $D_1(\gamma_1) = D_2(\gamma_2).$ 
To see why this must be true, suppose that this is not true and let
$D_1(\gamma_1) \neq D_2(\gamma_2).$ Customers from the queue with
a higher delay cost will have an incentive to move to the queue 
with a lower delay cost. This implies that a $K^W$ with $D_1(\gamma_1) 
\neq D_2(\gamma_2)$ is not at equilibrium. Recall the definition
$\gamma^+:= \left\lbrace \gamma_1 : D_1(\gamma_1) = D_2(\gamma_2)
\right\rbrace.$ Since, $D_1(0) < D_2(\lambda)$ and $D_2(0) < D_1(\lambda),$
we have $0 < \gamma^+ < \lambda.$ Now for any kernel $K$ satisfying
$\gamma_1 = \gamma^+,$ since $c_1 = c_2,$ the cost for any customer
at the two servers is equal. Hence there is no incentive for any customer to 
deviate from its choice of the server. The Wardrop equilibrium kernel
$K^W$ though not unique must however satisfy  
$\lambda \int_{\beta=a}^{b} K^W(\beta,j) dF(\beta) = \gamma^+ $.
\end{proof}

\section{Problem Formulations}
\label{sec:formulation}
Having characterized the Wardrop equilibrium kernel $K^W$ for a two server
system, we will now formulate the revenue maximization problems for both the 
monopoly and the duopoly model. Let $R_j(c_j, \gamma_j):= c_j \gamma_j$ 
denote the revenue rate at server $j$
when the arrival rate of customers due to the corresponding kernel $K^W$ is $\gamma_j$
for $j = 1, 2.$  For the monopoly model, let $R_T(c_1, \gamma_1)$
denote the revenue rate for the monopoly service system.
Since $\gamma_2 = \lambda - \gamma_1,$ it suffices 
to express the revenue rate as a function of only $\gamma_1.$
We have 
$$	
R_T(c_1, \gamma_1) := c_1\gamma_1 + c_2 \gamma_2 = c_2\lambda + 
(c_1 - c_2)\gamma_1.
$$
Note from Theorem \ref{thm:wardrop-cts}, that the argument $\gamma_1$ is determined by
  the kernel $K^W$ which in turn depends on the admission prices $c_1$ 
  and $c_2.$ This dependence will be made explicit by writing $\gamma_1$ as
  $\gamma_1(c_1, c_2)$ and the revenue optimization problem for the monopoly can now be 
stated as follows.
  \begin{equation}
    \label{prog:rev_mono}
  \tag{P1}  
 \begin{aligned}
      & \underset{c_1}{\max}
      & & R_T(c_1, \gamma_1(c_1, c_2)) = c_2\lambda + 
(c_1 - c_2)\gamma_1(c_1, c_2) \\
      & \text{subject to} & &  0 \leq c_1 \leq c^1
    \end{aligned}
  \end{equation}
  where $c^1$ is an arbitrarily large value such that $\gamma_1(c^1,c_2) = 0.$
    $c^1$ is a technical requirement to ensure a compact domain and
  one could also define $c^1:= \inf \left\lbrace c: \gamma_1(c,c_2) = 0
  \right\rbrace$ in which case we have $\gamma_1(c_1,c_2) = 0$
  for any $c_1 > c^1.$
  To be able to solve program $\ref{prog:rev_mono}$
  using standard optimization techniques, a closed form 
  expression for $\gamma_1(c_1, c_2)$ would be convenient. When $c_1 > c_2, $ 
  and $\beta_1 \in (a,b),$ from Theorem \ref{thm:wardrop-cts} and the definition
  of $\gamma_1$, it can be seen that
  \begin{equation}
  \label{eq:gamma_c1_c2}
\gamma_1(c_1, c_2) = \lambda(1 - F(\beta_1))   
  \end{equation}
where 
    $$
  \beta_1 = \left\lbrace \beta : c_1 + \beta D_1(\lambda(1 - F(\beta)))
   = c_2 + \beta D_2(\lambda F(\beta))\right\rbrace.
  $$
  
  A similar condition follows when $c_1 <  c_2$ and it can be seen
  that obtaining an explicit expression for $\gamma_1(c_1, c_2)$ is difficult.
  Note that we have not assumed any functional form for $D_j$ and $F(\cdot)$
  and for certain choice of these functions, a closed form expression for  
  $\gamma_1(c_1, c_2)$ may not be possible. Without an analytic expression
  for $\gamma_1(c_1, c_2),$ it is difficult to solve the revenue maximization problem.
  Therefore we require an alternative approach to solve program 
  $\ref{prog:rev_mono}.$ One possible alternative is to let the equilibrium 
  $\gamma_1$ (the value of $\gamma_1$ at equilibrium) 
  be the optimization variable and represent other variables of the system 
  such as $c_1, c_2, \beta_1$ as a function of $\gamma_1.$
%   Now since $\gamma_1$ is the equilibrium arrival rate to Server~$1,$
%   we are searching over all feasible equilibria   
% %  Henceforth,  $\gamma_j$ will denote the arrival rate at 
% Server~$j$ for $j=1,2$ when the  kernel $K^W$ satisfies the Wardrop
% equilibrium condition.
 With slight abuse of notation, we will use $c_j(\gamma_j)$ to denote
 the admission price at Server~$j$ when the  arrival rate to Server~$j$
 at equilibrium is $\gamma_j$ where $j = 1,2.$ Similarly, we shall use
 $\beta_1(\gamma_1)$ to represent the threshold $\beta_1$ corresponding 
 to an equilibrium arrival rate of $\gamma_1$
 to Server~$1.$  Note that $c_1(\gamma_1)$ is also a function of $c_2.$
 This is because the equilibrium $\gamma_1$ depends on the difference
 $(c_1 - c_2)$ and not on their individual values. This 
 is clear from Theorem \ref{thm:wardrop-cts} (Eq. \eqref{eq:beta_1}).
 Therefore for a given $c_2$ and $\gamma_1 \in (0, \lambda)$ one can determine $c_1$
 using Eq. \eqref{eq:beta_1}. We have suppressed this
 dependence on $c_2$ to simplify notation.
 For the monopoly model $c_2(\gamma_2)=c_2$ as $c_2$ is 
 assumed fixed. Thus the equivalent revenue optimization problem  
 for the monopoly is as follows.
  \begin{equation}
    \label{prog:rev_mono_new}
  \tag{P2}  
 \begin{aligned}
      & \underset{\gamma_1}{\max}
      & & R_T(c_1(\gamma_1), \gamma_1) = c_2\lambda + 
(c_1(\gamma_1) - c_2)\gamma_1 \\
      & \text{subject to} & &  0 \leq \gamma_1 \leq \gamma^1(c_2) \leq \lambda
    \end{aligned}
  \end{equation}
  where $\gamma^1(c_2)$ determines the domain for the feasible values 
  of $\gamma_1$ as a function of $c_2.$ An
  intuitive explanation for the quantity $\gamma^1(c_2)$ is as follows.
  Consider the case $c_1 = c_2 = 0.$ From Theorem \ref{thm:wardrop-cts},
  we have $\gamma_1 = \gamma^+$ where $0 < \gamma^+ < \lambda.$
  Using the notation $\gamma_1(c_1, c_2),$ we have 
  $\gamma_1(0, 0) = \gamma^+.$ For any $c_1 > 0,$  
  $\gamma_1(c_1, 0) < \gamma^+$ since the increase in the admission price 
  at Server~1 makes the server more costly and decreases the 
  resulting $\gamma_1.$
%     Now as one increases $c_1$ from zero,  
%   customers will find Server~$1$ to be more costly and 
%   the equilibrium $\gamma_1$ decreases with increase in $c_1.$
%   In general, it is intuitive to see that for a fixed $c_2, $
%   $\gamma_1(c_1, c_2),$ is a decreasing function of $c_1.$
  Clearly, for any $c_1 \geq 0$ and $c_2 = 0,$ $\gamma_1 \notin (\gamma^+, \lambda]$ 
  and $c_1(\gamma_1)$ in program \ref{prog:rev_mono_new} cannot be defined
  for $\gamma_1 \in (\gamma^+, \lambda].$
%   since no admission price at 
%   Server~$1$ can lead to this equilibrium $\gamma_1.$
  Therefore when $c_2 = 0,$ the domain for the optimization 
  variable $\gamma_1$ should be restricted to $[0, \gamma^+].$
  In general, for an arbitrary $c_2,$ the domain for 
  $\gamma_1$ in program \ref{prog:rev_mono_new} is defined 
  using $\gamma^1(c_2)$ and this will be characterized formally 
  in Section \ref{sec:monopoly}. 
    
Now consider the duopoly market with two competing servers charging
admission prices $c_1$ and $c_2$ to their arriving customers.
The objective of Server~$j$ is to choose an admission price $c_j$
that maximizes its revenue rate $R_j.$ For this duopoly,
the revenue optimization problem for Server~$j$ is as follows.

\begin{equation*}
   \label{prog:rev_duo}
   \tag{P3}  
 \begin{aligned}
      & \underset{c_j}{\max}
      & & R_j(c_j, \gamma_j) = c_j \gamma_j(c_j, c_{j^-}) \\
      & \text{subject to} & &  0 \leq c_j \leq c^j \\
      & \text{given} & & c_{j^-}
    \end{aligned}
\end{equation*} 

where $c_{j^-}$ represents the admission price at the server other than $j,$ i.e.,
$c_{1^-} = c_2$ and $c_{2^-} = c_1.$ 

For the duopoly market, the aim is to obtain the Nash equilibrium set of 
admission prices to be charged at the two servers. We shall denote the 
Nash equilibrium prices by the tuple $(c_1^*, c_2^*).$ Using the notion of 
the best response function \cite{Osborne03}, $(c_1^*, c_2^*)$ can be characterized
as follows. Let $B_i(c_{i^-})$ denote the admission price at 
Server~$i$ that maximizes the server revenue $R_i$ for a given value of $c_{i^-}$
for $i=1,2.$ Clearly, $B_i(c_{i^-})$ is the maximizer in program \ref{prog:rev_duo}
and it is easy to see that

\begin{eqnarray*}
 B_1(c_2) &:=& \left\lbrace c_1 \geq 0 : c_1 \gamma_1(c_1, c_2) \geq c_1^{\prime} \gamma_1(c^{\prime}_1, c_2) \forall c^{\prime}_1 \geq 0\right\rbrace \\
 B_2(c_1) &:=& \left\lbrace c_2 \geq 0 : c_2 \gamma_2(c_1, c_2) \geq c_2^{\prime} \gamma_2(c_1, c^{\prime}_2) \forall c^{\prime}_2 \geq 0\right\rbrace.
 \end{eqnarray*}
and 
\begin{eqnarray*}
  (c_1^*, c_2^*) = \left\lbrace (c_1, c_2) : B_1(c_2) = c_1, B_2(c_1) = c_2\right\rbrace.
 \end{eqnarray*} 
However as argued earlier, the closed form expression for $\gamma_j(c_j, c_{j^-})$ is not
easy to obtain. This makes it difficult to solve program \ref{prog:rev_duo} and obtain the 
best responses $B_i(c_{i^-})$ for $i=1,2.$ As a result, obtaining $(c_1^*, c_2^*)$ is in 
general not easy. As in the case of the monopoly program, to obtain $(c_1^*, c_2^*),$
we need to first reformulate program \ref{prog:rev_duo} by letting $\gamma_j$ denote the 
optimizing variable. The corresponding optimization problem is as follows.
  \begin{equation}
    \label{prog:rev_duo_new}
   \tag{P4}  
 \begin{aligned}
      & \underset{\gamma_j}{\max}
      & & R_j(c_j(\gamma_j), \gamma_j) := c_j(\gamma_j)\gamma_j \\
      & \text{subject to} & &  0 \leq \gamma_j \leq \gamma^j(c_{j^-}) \leq \lambda \\
      & \text{given} & & c_{j^-}.
    \end{aligned}
  \end{equation}  
  $c_j(\gamma_j)$ can be interpreted as the admission
  price at Server~$j$ that leads to the equilibrium arrival rate
  of $\gamma_j$ when the other server charges $c_{j^-}.$ 
  Note again that $c_j(\gamma_j)$ will be a 
  function of $c_{j^-}$ but we do not make this explicit in the notation. 
  To lighten notation, we will not make this dependence explicit.
  Now let $\gamma_1^*(c_2)$ denote the maximizer in program \ref{prog:rev_duo_new}
  for a given value of $c_2$. Then the best response $c_1$ is in fact given by the function
  $c_1(\gamma_1^*(c_2)).$ Therefore, once the function $c_1(\gamma_1)$ is characterized, 
  the best response now denoted by $\hat{B}_1(c_2)$ satisfies
  $\hat{B}_1(c_2) = c_1(\gamma_1^*(c_2)).$  We now have 
  \begin{eqnarray*}
  (c_1^*, c_2^*) = \left\lbrace (c_1, c_2) : \hat{B}_1(c_2) = c_1, \hat{B}_2(c_1) = c_2\right\rbrace
 \end{eqnarray*} 
  where $\hat{B}_i(c_{i^-}) = c_i(\gamma_i^*(c_{i^-}))$ and 
  as stated earlier, $\gamma_i^*(c_{i^-})$ is the maximizer in program \ref{prog:rev_duo_new}
  for $i=1,2.$ It is therefore clear that $(c_1^*, c_2^*)$ can be 
  obtained  once we have characterized $c_1(\gamma_1).$
  We shall analyze the program  $\ref{prog:rev_duo_new}$
  in detail in Section \ref{sec:duopoly} and explicitly
  characterize the functions $c_j(\gamma_j)$ for $j = 1,2$ 
  to be able to obtain $(c_1^*, c_2^*)$.

  %   In program \ref{prog:rev_duo} we are interested in the Nash 
%   Equilibrium  set of prices denoted by $(c_1^*, c_2^*)$ from
%   which no server has an incentive to unilaterally deviate and increase
%   the individual revenue rate $R_j$ for $j=1,2.$ In the equivalent
%   program \ref{prog:rev_duo_new}, the interest is to obtain 
%   $(\gamma_1^*, \gamma_2^*)$ such that $(c_1(\gamma_1^*)
%   , c_2(\gamma_2^*)) = (c_1^*, c_2^*).$ 
%   Note that $(\gamma_1^*, \gamma_2^*)$ must satisfy an
%   additional constraint that $\gamma_1^* = \lambda -
%   \gamma_2^*.$  Here $\gamma_j^*$
%   is the  equilibrium arrival rate to Server~$j$ when 
%   the admission price $c_{j}(\gamma_j^*) = c_{j}^*.$
% 

\section{Monopoly Market}
\label{sec:monopoly}
In this section, we will analyze the monopoly program $\ref{prog:rev_mono_new}.$
To be able to solve program $\ref{prog:rev_mono_new},$ we need to
characterize $c_1(\gamma_1)$ for a fixed value of $c_2.$ This procedure is outlined below.
From Eq. \eqref{eq:beta_1} of Theorem \ref{thm:wardrop-cts}, we know that
when $\beta_1 \in (a,b),$ (and hence $\gamma_1 \in (0, \lambda)$) we have
$$c_1 - c_2 = \beta_1 \left( D_2(\gamma_2) - D_1(\gamma_1)\right).$$
We will express the right hand side of the above equation
as a function of $\gamma_1,$ i.e.,  
\begin{equation}
\label{eq:ggamma}
 g_1(\gamma_1) :=  \beta_1(\gamma_1)\left(D_2(\lambda - \gamma_1) - D_1(\gamma_1)\right)
\end{equation}
where  $\beta_1(\gamma_1)$ represents the threshold $\beta_1$
for a kernel $K^W$ that satisfies Theorem \ref{thm:wardrop-cts} and  
corresponds to an equilibrium arrival rate of $\gamma_1.$
Note that $g_1(\gamma_1)$ characterizes the difference
$(c_1 - c_2)$ as a function of $\gamma_1.$
For a fixed $c_2$ and for a $\gamma_1$ satisfying 
$0 \leq \gamma_1 \leq \gamma^1(c_2) \leq \lambda$ (the 
domain of $\gamma_1$ in program \ref{prog:rev_mono_new})
we see that $c_1(\gamma_1) = c_2 + g_1(\gamma_1).$
We characterize $c_1(\gamma_1)$ in the following manner. 
We first characterize $\beta_1(\gamma_1)$ using Lemmas 
\ref{lemma:compare_gamma_c} and \ref{lemma:betagamma}. Then in Lemma
\ref{lemma:gfx}, we characterize $g_1(\gamma_1).$ For a fixed $c_2,$
we then obtain  $\gamma^1(c_2)$ in Lemma \ref{lemma:gamma^1} that 
determines the domain of $c(\gamma_1)$ .
To prove this lemma we need to characterize the uniqueness of kernel $K^W$
for a fixed difference $(c_1 - c_2).$  This is part of 
Lemma \ref{lemma:threshold}. Finally we characterize $c_1(\gamma_1)$ 
in Theorem \ref{thm:cgamma} using $g_1(\gamma_1)$ and 
$\gamma^1(c_2).$

Recall that we make minimal assumptions on the distribution 
$F(\cdot)$ and on the delay cost function $D_j(\cdot).$ 
For our numerical examples and also to illustrate the properties
of the functions $\beta_1(\cdot), g_1(\cdot)$ and $c_1(\cdot),$ 
we consider the following examples for $F(\cdot)$ and $D_j(\cdot).$
The distribution $F(\cdot)$ is from one of the following;
\begin{itemize}
\item Uniform distribution over the range $[a,b].$ 
\item Exponential distribution with mean $\tau.$
\item Gamma distribution with shape $k$ and scale $\theta$. 
\end{itemize}
For the delay cost function, we shall assume one of the following.
\begin{itemize}
\item $D_j(\gamma_j) = \frac{\gamma_j}{\mu_j}.$ This corresponds to the 
case of linear delay.
\item  $D_j(\gamma_j) = \frac{1}{\mu_j - \gamma_j}$ and $\mu_j > \lambda.$
 This corresponds to $M/M/1$ type delay cost function.
\end{itemize}
The distribution and the delay cost functions outlined above are commonly used to 
model heterogeneous customers and congestion costs.
(Refer \cite{Bodas11a,Bodas11b,Bodas14,Luski76,Levhari78})

We now begin with the following lemma that identifies the necessary and sufficient 
condition on the equilibrium $\gamma_1$ when either $c_1 \geq c_2$ or $c_1 < c_2.$ 
\begin{lemma}
 \label{lemma:compare_gamma_c}
 $\gamma_1 \in [0,\gamma^+]$ iff $c_1 \geq c_2$ while 
  $ \gamma_1 \in (\gamma^+,\lambda]$ iff $c_1 < c_2.$
 \end{lemma}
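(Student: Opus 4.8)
The plan is to reduce the biconditional to two one-directional implications and then close it using the fact that the two $\gamma_1$-intervals partition $[0,\lambda]$. First I would record the structural fact underlying $\gamma^+$. Since $D_1$ and $D_2$ are both strictly increasing, the function $h(\gamma_1):=D_1(\gamma_1)-D_2(\lambda-\gamma_1)$ is strictly increasing in $\gamma_1$; by the standing assumptions $D_1(0)<D_2(\lambda)$ and $D_2(0)<D_1(\lambda)$ we have $h(0)<0<h(\lambda)$, so $\gamma^+$ is the unique root of $h$ and lies in $(0,\lambda)$. Consequently $D_1(\gamma_1)<D_2(\gamma_2)\iff\gamma_1<\gamma^+$ and $D_1(\gamma_1)>D_2(\gamma_2)\iff\gamma_1>\gamma^+$, which converts every statement about the ordering of the two delays at equilibrium into a statement about the position of $\gamma_1$ relative to $\gamma^+$.

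Next I would prove the implication $c_1\ge c_2\Rightarrow\gamma_1\in[0,\gamma^+]$. The case $c_1=c_2$ is immediate from the second bullet of Theorem \ref{thm:wardrop-cts}, which forces $\gamma_1=\gamma^+$. For $c_1>c_2$ I would read the threshold off Theorem \ref{thm:wardrop-cts}: if $\beta_1\in(a,b)$, Eq. \eqref{eq:beta_1} gives $c_1-c_2=\beta_1(D_2(\gamma_2)-D_1(\gamma_1))$, and since the support $[a,b]$ consists of positive reals we have $\beta_1>0$, so $c_1>c_2$ forces $D_1(\gamma_1)<D_2(\gamma_2)$, i.e. $\gamma_1<\gamma^+$ by the first paragraph. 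The boundary $\beta_1=b$ gives $\gamma_1=\lambda(1-F(b))=0\in[0,\gamma^+]$, while $\beta_1=a$ would give $\gamma_1=\lambda$ and hence $D_1(\lambda)>D_2(0)$, which through \eqref{eq:beta_1} forces $c_1<c_2$, a contradiction; so $\beta_1=a$ cannot occur here. The symmetric argument, using the \emph{resp.} branch of Theorem \ref{thm:wardrop-cts} (under which $\gamma_1=\lambda F(\beta_1)$), yields $c_1<c_2\Rightarrow\gamma_1\in(\gamma^+,\lambda]$.

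Finally I would close the loop by contraposition against the disjoint decomposition $[0,\gamma^+]\cup(\gamma^+,\lambda]=[0,\lambda]$. Every equilibrium arrival rate $\gamma_1$ lies in $[0,\lambda]$, so if $\gamma_1\in[0,\gamma^+]$ then $c_1<c_2$ is impossible (by the second implication it would force $\gamma_1>\gamma^+$), whence $c_1\ge c_2$; symmetrically $\gamma_1\in(\gamma^+,\lambda]$ forces $c_1<c_2$. Together with the two forward implications, this gives all four directions and hence both biconditionals.

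I expect the only delicate point to be the boundary behaviour at $\beta_1\in\{a,b\}$, where Eq. \eqref{eq:beta_1} is not asserted by Theorem \ref{thm:wardrop-cts}: there one must argue directly that $\beta_1=b$ corresponds to $\gamma_1=0$ and $\beta_1=a$ to $\gamma_1=\lambda$, and then discard the inconsistent endpoint using the monotonicity of the delays. Everything else is bookkeeping resting on the strict monotonicity of $h$ and the positivity of the delay-sensitivity support.
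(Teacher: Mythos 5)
Your proof is correct and rests on the same ingredients as the paper's --- Theorem \ref{thm:wardrop-cts}, strict monotonicity of the delays, and direct exclusion of the boundary thresholds --- but it organizes the logic differently. The paper proves both directions of the first biconditional head-on: for $\gamma_1\in[0,\gamma^+]\Rightarrow c_1\ge c_2$ it argues separately at $\gamma_1=0$ (no customer joins Server~1, which with $D_1(0)<D_2(\lambda)$ forces $c_1>c_2$), at $\gamma_1=\gamma^+$ (an equilibrium-deviation contradiction forcing $c_1=c_2$), and at interior $\gamma_1$ via Eq.~\eqref{eq:beta_1}; it then proves the converse by the same case split on $\beta_1\in\{a\}\cup(a,b)\cup\{b\}$ that you use, and defers the second biconditional to ``similar lines.'' You instead prove only the two price-to-rate implications and recover both converses at once from the observation that $[0,\gamma^+]$ and $(\gamma^+,\lambda]$ partition $[0,\lambda]$ while $c_1\ge c_2$ and $c_1<c_2$ partition the price space; this contraposition spares you the separate arguments at $\gamma_1=0$ and $\gamma_1=\gamma^+$ and handles the second biconditional for free. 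Your flagged caveat about the endpoints is exactly the right one: Eq.~\eqref{eq:beta_1} is not asserted at $\beta_1\in\{a,b\}$, so ruling out $\beta_1=a$ under $c_1>c_2$ must go through the raw Wardrop inequality $c_1-c_2\le\beta\left(D_2(0)-D_1(\lambda)\right)<0$, which is what the paper does as well (its sentence identifying $\beta_1=a$ with all customers choosing Server~2 is a slip --- the inequality it then writes is the condition for all customers to choose Server~1, i.e.\ $\gamma_1=\lambda$, consistent with your reading).
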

  \begin{proof}
 See Appendix for proof.
\end{proof}
Refer Figure \ref{fig:lemma1_1} and \ref{fig:lemma1_2} for an illustration of the 
lemma.
% Figure \ref{fig:lemma1_1} consider the case when $\gamma_1 \in [0,\gamma^+]$
% and $c_1 \geq c_2.$ The other case in the lemma is illustrated in Figure
% \ref{fig:lemma1_2}.
% 
% \begin{figure}
%   \begin{center}
%     \includegraphics[height=2in]{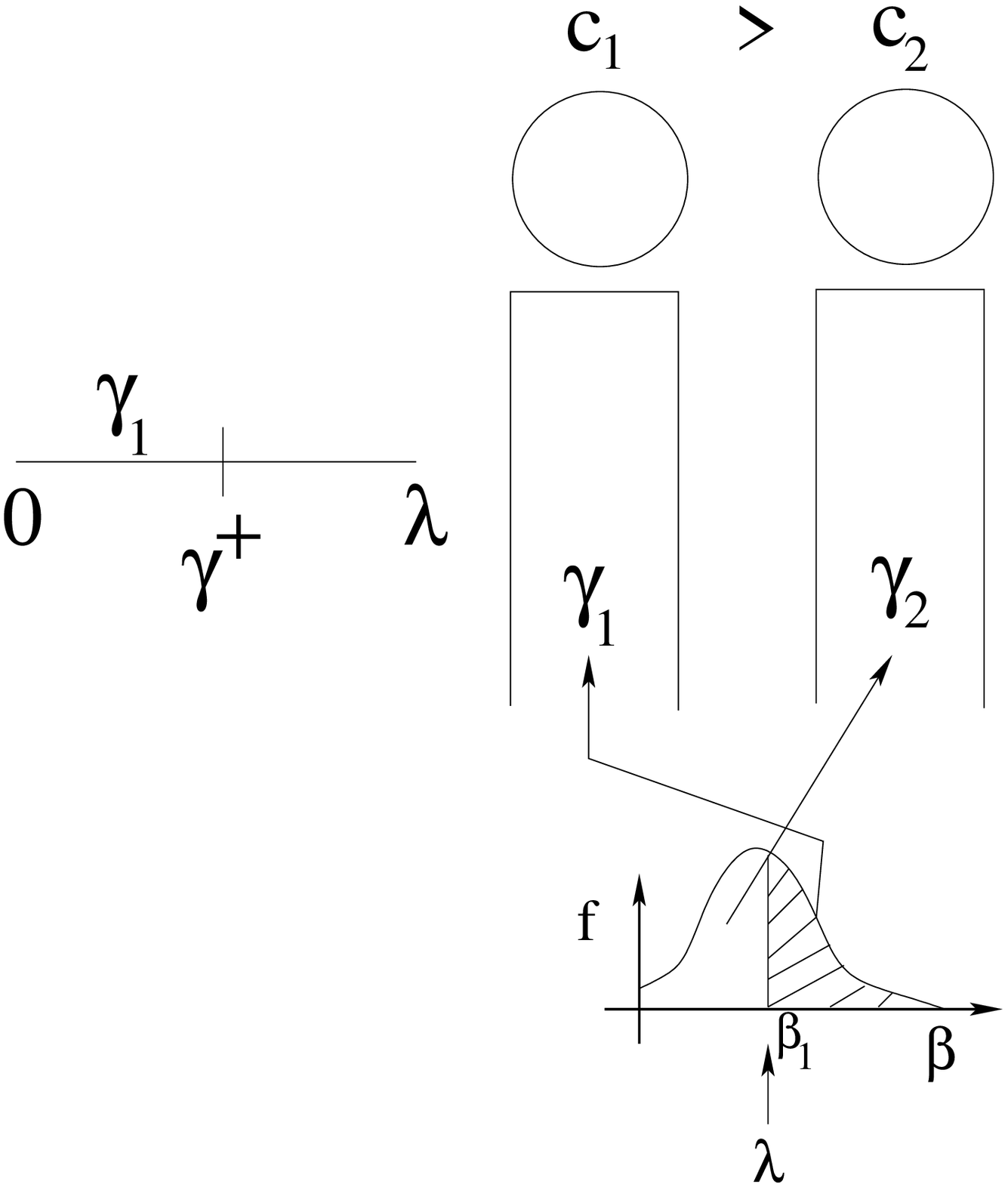}
%   \end{center}
%   \caption{Illustrating $\gamma_1 \in [0,\gamma^+]$ and $c_1 \geq c_2.$}
%   \label{fig:lemma1_1}
% \end{figure}
% 
% \begin{figure}
%   \begin{center}
%     \includegraphics[height=2in]{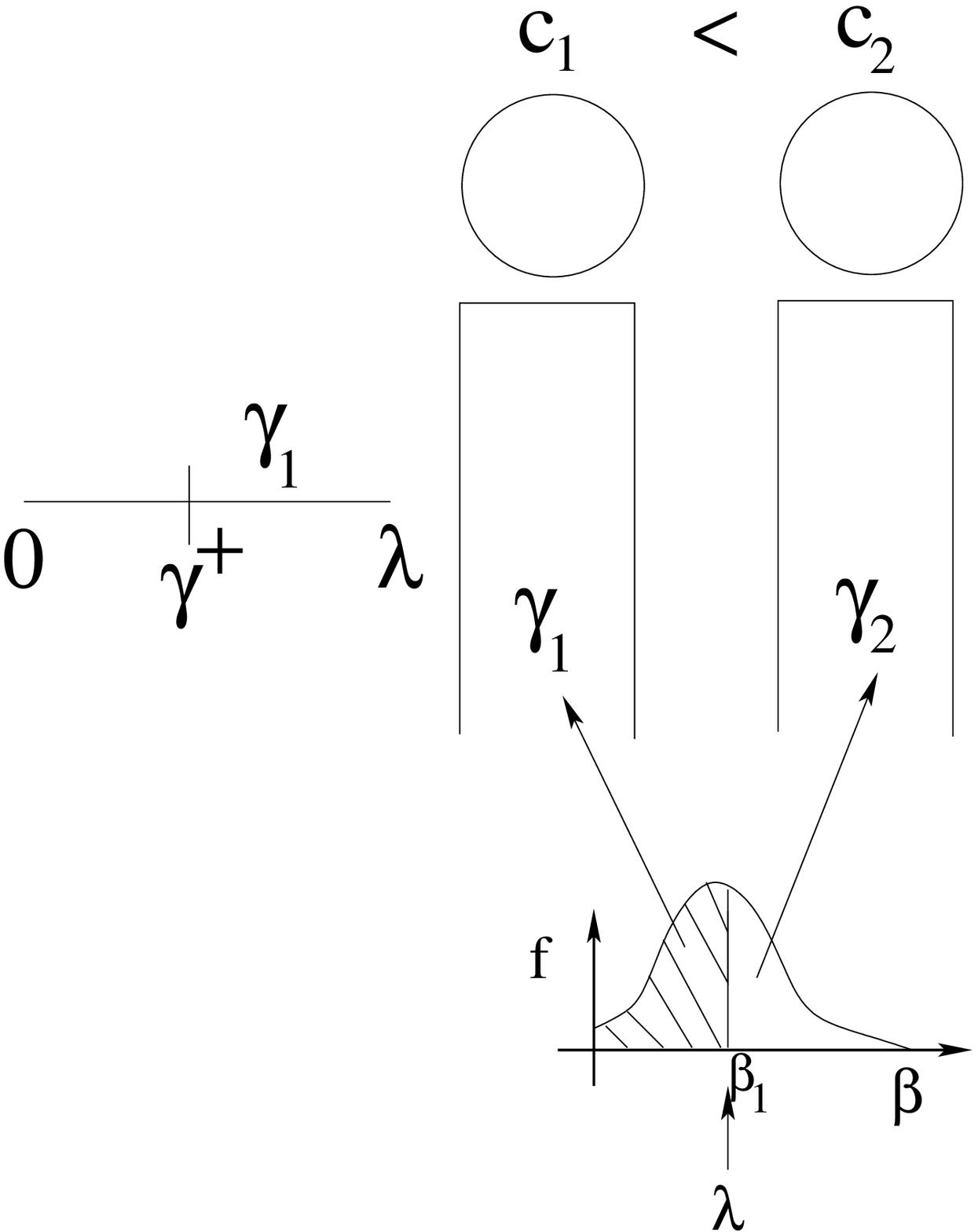}
%   \end{center}
%   \caption{Illustrating $ \gamma_1 \in (\gamma^+,\lambda]$ and $c_1 < c_2.$}
%   \label{fig:lemma1_2}
% \end{figure}

\begin{figure*}
  \begin{minipage}{6cm}
    \begin{center}
    %\vspace{-3mm}
      \includegraphics[width=.8\columnwidth, height=6cm]{lemma1_1.eps}
      %\vspace{-2mm}
       \caption{Illustrating $\gamma_1 \in [0,\gamma^+]$ and $c_1 \geq c_2.$}
\label{fig:lemma1_1}
    \end{center}
  \end{minipage}
  \hspace{0.8cm}
  \begin{minipage}{6cm}
    \begin{center}
      \includegraphics[width=.8\columnwidth, height=6cm]{lemma1_2.eps}
      \caption{Illustrating $ \gamma_1 \in (\gamma^+,\lambda]$ and $c_1 < c_2.$}
   \label{fig:lemma1_2}
    \end{center}
  \end{minipage}
\end{figure*}

Next, we express the threshold
$\beta_1$ of Theorem \ref{thm:wardrop-cts} as a function of $\gamma_1$.
Recall from the theorem that $K^W$ is characterized by $\beta_1$ when 
$c_1 \neq c_2.$  We let $\beta_1(\gamma_1)$ to
denote the value of the threshold $\beta_1$ (characterizing $K^W$) 
for a given $\gamma_1$ such that $\gamma_1 \neq \gamma^+.$ 
We have the following lemma.
\begin{lemma}
\label{lemma:betagamma}
 \begin{eqnarray}
    \beta_1(\gamma_1) & = & \begin{cases} 
       F^{-1}\left(\frac{\lambda - \gamma_1}{\lambda}\right) & 
       \mbox{ for } 0 \leq \gamma_1 < \gamma^+,\\
      F^{-1} \left(\frac{\gamma_1}{\lambda}\right) & \mbox{ for } 
      \gamma^+ < \gamma_1 \leq  \lambda.
          \end{cases} 
\end{eqnarray}
where $F^{-1}$ represents the quantile function or the inverse 
function of the distribution $F.$ 
\end{lemma}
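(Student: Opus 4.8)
The plan is to split the analysis into the two regimes of $\gamma_1$ appearing in the statement, using Lemma~\ref{lemma:compare_gamma_c} to convert each regime into the corresponding ordering of $c_1$ and $c_2$, and then invoking Theorem~\ref{thm:wardrop-cts} to read off which customers route to Server~$1$.

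First I would take $0 \leq \gamma_1 < \gamma^+$. By Lemma~\ref{lemma:compare_gamma_c} this corresponds to $c_1 > c_2$ (the boundary $\gamma_1 = \gamma^+$ is excluded, since it forces $c_1 = c_2$). Theorem~\ref{thm:wardrop-cts} then places the high-cost customers $\beta \in (\beta_1,b]$ at Server~$1$, so that integrating the density against this kernel and using $F(b)=1$ gives
$$
\gamma_1 = \lambda \int_{\beta_1}^{b} dF(\beta) = \lambda\left(1 - F(\beta_1)\right),
$$
whence $F(\beta_1) = (\lambda - \gamma_1)/\lambda$. Symmetrically, for $\gamma^+ < \gamma_1 \leq \lambda$, Lemma~\ref{lemma:compare_gamma_c} gives $c_1 < c_2$ and Theorem~\ref{thm:wardrop-cts} routes the low-cost customers $\beta \in [a,\beta_1]$ to Server~$1$; using $F(a)=0$ this yields
$$
\gamma_1 = \lambda \int_{a}^{\beta_1} dF(\beta) = \lambda F(\beta_1),
$$
so that $F(\beta_1) = \gamma_1/\lambda$.

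The final step is to invert these two relations. Because the paper assumes $F$ is absolutely continuous and strictly increasing on $[a,b]$ with $f(x) \neq 0$, the quantile function $F^{-1}$ is well defined and single-valued, and applying it to the two displayed identities produces exactly the two branches in the statement. I do not expect a substantive obstacle here, as the computation is a direct integration of the equilibrium kernel from Theorem~\ref{thm:wardrop-cts}. The only points requiring care are (i) the strict monotonicity of $F$, which is what guarantees that $\beta_1(\gamma_1)$ is a genuine single-valued function of $\gamma_1$, and (ii) the exclusion of $\gamma_1 = \gamma^+$, since there $c_1 = c_2$ and the single-threshold characterization of the kernel --- and hence $\beta_1$ itself --- ceases to be defined.
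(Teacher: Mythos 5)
Your proof is correct and follows essentially the same route as the paper's: both reduce each $\gamma_1$-regime to the corresponding price ordering via Lemma~\ref{lemma:compare_gamma_c}, integrate the threshold kernel from Theorem~\ref{thm:wardrop-cts} to get $\gamma_1 = \lambda(1-F(\beta_1))$ or $\gamma_1 = \lambda F(\beta_1)$, and invert using the strict monotonicity and absolute continuity of $F$. Your explicit remarks on the exclusion of $\gamma_1=\gamma^+$ and on single-valuedness are consistent with the paper's treatment.
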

 \begin{proof}
 See Appendix for proof.
\end{proof}

Note that $\beta_1(\gamma_1)$ is not defined in Lemma
\ref{lemma:betagamma} when $\gamma_1 = \gamma^+.$
This is because the Wardrop kernel $K^W$ with
$\gamma_1 = \gamma^+$ is not unique and need not be characterized by 
a single threshold. We shall however assume from now on that when 
$\gamma_1 = \gamma^+$ (and hence $c_1 = c_2$), the corresponding kernel
$K^W$ is also characterized by a single threshold $\beta_1.$ Hence for 
$c_1 = c_2,$ we have
\begin{eqnarray}
%\label{eq:wardrop-cts}
   K^W(\beta,1) & =  & \begin{cases} 
      \delta_1  & \mbox{ for } \beta \in (\beta_1,b],\\
      \delta_2  & \mbox{ for } \beta \in [a,\beta_1].
          \end{cases} 
\end{eqnarray}

As a result, we define $\beta_1(\gamma^+)
= F^{-1}\left(\frac{\lambda - \gamma^+}{\lambda}\right)$
and the modified $\beta_1(\gamma_1)$ is now as
follows.
\begin{eqnarray}
\label{eq:betagamma}
    \beta_1(\gamma_1) & = & \begin{cases} 
       F^{-1}\left(\frac{\lambda - \gamma_1}{\lambda}\right) & 
       \mbox{ for } 0 \leq \gamma_1 \leq \gamma^+,\\
      F^{-1} \left(\frac{\gamma_1}{\lambda}\right) & \mbox{ for } 
      \gamma^+ < \gamma_1 \leq  \lambda.
          \end{cases} 
\end{eqnarray}

\begin{figure}
  \begin{center}
    \includegraphics[height=2in]{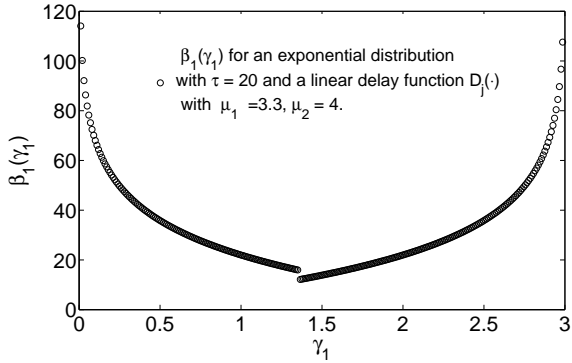}
  \end{center}
  \caption{Illustrating $\beta_1(\gamma_1)$ when the servers are not identical.}
  \label{fig:betagamma}
\end{figure}

Refer Fig. \ref{fig:betagamma} for a numerical evaluation of 
Eq. \eqref{eq:betagamma} for the case when $F(\cdot)$ is an
exponential distribution with $\tau = 20.$ The delay functions 
are $D_j(\gamma_j) = \frac{\gamma_j}{\mu_j}$ where $\mu_1 = 3.3$
and $\mu_2 = 4.$ Fig. \ref{fig:betagamma2}
corresponds to the case when the two servers are identical, i.e., 
$\mu_1 = \mu_2 = 4.$

\begin{remark}
 Recall our assumption that $F(\cdot)$ is absolutely 
 continuous and strictly increasing in its domain. Further,
 the support is $[a,b]$ and hence $F(\cdot)$ is a bijective
 function whose inverse exists. In fact $F^{-1}(\cdot)$ is continuous
 and strictly increasing in its domain.  Since $F^{-1}(\cdot)$ is 
 continuous in its arguments, $\beta_1(\gamma_1)$
 is continuous when $0 \leq \gamma_1 < \gamma^+$ and $\gamma^+ < \gamma_1 \leq \lambda.$
 However at $\gamma_1 = \gamma^+,$  $\beta_1(\gamma_1)$ is in general not continuous (Refer Fig. \ref{fig:betagamma}).
 For the case when the servers are identical, i.e., $D_1(\gamma) = D_2(\gamma) = D(\gamma),$ 
 we see from the definition of $\gamma^+$ that $\gamma^+  = \frac{\lambda}{2}.$
 For this case, it is easy to see that $\beta_1(\gamma_1)$  
 is continuous at $\gamma_1 = \gamma^+,$ (but not differentiable).
 See Fig. \ref{fig:betagamma2}.    
 \end{remark}
\begin{figure}
  \begin{center}
    \includegraphics[height=2in]{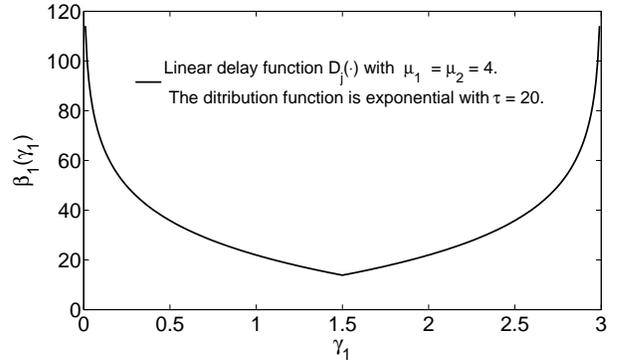}
  \end{center}
  \caption{Illustrating $\beta_1(\gamma_1)$ for the case of identical servers.}
  \label{fig:betagamma2}
\end{figure}

% To clarify the notation once again, we use $\beta_1$ to denote 
% the threshold characterizing the equilibrium kernel $K^W$ while $\beta_1(\gamma_1)$ 
% is used to indicate the value of this threshold $\beta_1$ when the equilibrium arrival rate to 
% Server ~1 is $\gamma_1.$ Also note that the exact value of $\beta_1$ determining 
% $K^W$ in Theorem \ref{thm:wardrop-cts} depends on all the parameters $\theta.$ See for 
% example Eq. \eqref{eq:beta_1}.

Having obtained $\beta_1(\gamma_1),$ we shall now analyze $g_1(\gamma_1)$ 
that was defined in Eq. \eqref{eq:ggamma}.
$g_1(\gamma_1)$ will be used later to obtain $c_1(\gamma_1).$
We have the following lemma.
\begin{lemma}
\label{lemma:gfx}
For $0 \leq \gamma_1 \leq \lambda,$ $g_1(\gamma_1)$ is continuous and monotonic 
decreasing in $\gamma_1.$ Further, $g_1(\gamma^+) = 0.$ 
\end{lemma}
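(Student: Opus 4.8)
The plan is to factor $g_1(\gamma_1) = \beta_1(\gamma_1)\,h(\gamma_1)$, where $h(\gamma_1) := D_2(\lambda-\gamma_1) - D_1(\gamma_1)$ is the bracketed term in Eq.~\eqref{eq:ggamma}, and to study the two factors separately. First I would record the elementary properties of $h$: since $D_1$ and $D_2$ are continuous, $h$ is continuous; since both have strictly positive derivatives, $h'(\gamma_1) = -D_2'(\lambda-\gamma_1) - D_1'(\gamma_1) < 0$, so $h$ is strictly decreasing on $[0,\lambda]$; and by the very definition of $\gamma^+$ (namely $D_1(\gamma^+) = D_2(\lambda-\gamma^+)$) we have $h(\gamma^+) = 0$. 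Consequently $h \geq 0$ on $[0,\gamma^+]$ and $h < 0$ on $(\gamma^+,\lambda]$. The final claim of the lemma is then immediate, since $g_1(\gamma^+) = \beta_1(\gamma^+)\,h(\gamma^+) = 0$.

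For monotonicity I would argue on the two subintervals separately, using the elementary fact that a product of two nonnegative nonincreasing functions is itself nonincreasing (and, with nonnegative nondecreasing factors, nondecreasing). On $[0,\gamma^+]$, Eq.~\eqref{eq:betagamma} gives $\beta_1(\gamma_1) = F^{-1}\!\left((\lambda-\gamma_1)/\lambda\right)$, which is positive (the support $[a,b]$ lies in the positive reals) and, since $F^{-1}$ is increasing, nonincreasing in $\gamma_1$; combined with $h \geq 0$ nonincreasing, the product $g_1$ is nonincreasing there. On $(\gamma^+,\lambda]$, $\beta_1(\gamma_1) = F^{-1}\!\left(\gamma_1/\lambda\right)$ is positive and nondecreasing, while $-h \geq 0$ is nondecreasing, so $-g_1 = \beta_1\cdot(-h)$ is nondecreasing, i.e. $g_1$ is nonincreasing. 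Since $g_1 \geq 0$ on the left piece, $g_1 \leq 0$ on the right piece, and the two meet at the common value $g_1(\gamma^+) = 0$, the function is decreasing across the whole of $[0,\lambda]$; one can even read off strict monotonicity, because every factor invoked is in fact strictly monotone away from the endpoints.

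Continuity on each open piece is clear, each being a product of continuous functions. The one delicate point---and the main obstacle---is continuity at $\gamma_1 = \gamma^+$, where $\beta_1$ itself is in general discontinuous (as noted in the Remark following Eq.~\eqref{eq:betagamma}). Here the factorization pays off: $\beta_1$ remains bounded, its one-sided limits $F^{-1}\!\left((\lambda-\gamma^+)/\lambda\right)$ and $F^{-1}\!\left(\gamma^+/\lambda\right)$ being finite, while $h(\gamma_1) \to h(\gamma^+) = 0$ from both sides, so both one-sided limits of $g_1$ equal $0 = g_1(\gamma^+)$ regardless of the jump in $\beta_1$. Thus the single feature that $\gamma^+$ is a zero of $h$ does all the work: it supplies the value $g_1(\gamma^+)=0$, absorbs the possible discontinuity of $\beta_1$, and furnishes the sign change that lets the two monotone pieces be stitched into one globally decreasing, continuous function.
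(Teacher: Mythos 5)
Your proposal is correct and follows essentially the same route as the paper: split at $\gamma^+$, use the monotonicity of $F^{-1}$ and of the delay difference $D_2(\lambda-\gamma_1)-D_1(\gamma_1)$ on each piece, and let the zero of that difference at $\gamma^+$ deliver both $g_1(\gamma^+)=0$ and continuity there. Your write-up is in fact a bit more careful than the paper's, which asserts the product of two decreasing factors is decreasing without flagging the sign conditions you make explicit, and which dismisses continuity at $\gamma^+$ as ``obvious'' where you spell out that the bounded factor $\beta_1$ cannot spoil the limit despite its jump.
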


 \begin{proof}
 See Appendix for proof.
\end{proof}

See Fig. \ref{fig:ggamma} for a numerical evaluation of $g_1(\gamma_1)$ when 
$F(\cdot)$ is an exponential distribution with $\tau = 20$ and when the servers
have a linear delay with $\mu_1 = 3.3$ and $\mu_2 = 4.$
\begin{figure}
  \begin{center}
    \includegraphics[height=2.5in]{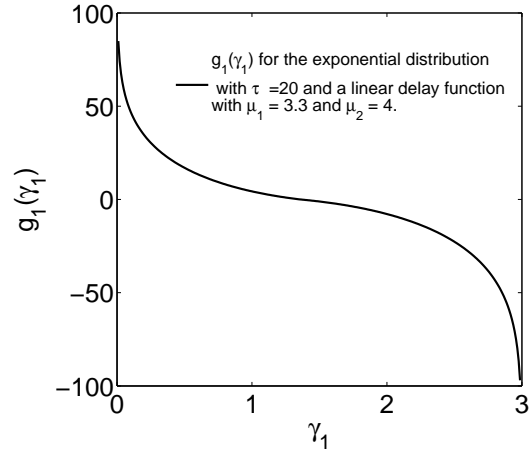}
  \end{center}
  \caption{Illustrating $g_1(\gamma_1)$ when the servers are not identical.}
  \label{fig:ggamma}
\end{figure}

To determine $c_1(\gamma_1),$ we also need to identify the domain over which 
it can be defined. As argued earlier, this domain is determined by 
$\gamma^1(c_2)$ which is characterized in Lemma \ref{lemma:gamma^1}. 
Before stating Lemma \ref{lemma:gamma^1}, we shall first characterize 
the uniqueness of $\beta_1,$ and hence the kernel $K^W,$ when $c_1 \neq c_2.$ 
While Theorem \ref{thm:wardrop-cts}, guarantees existence of a $\beta_1$ 
characterizing kernel $K^W,$ it does not guarantee the uniqueness of 
$\beta_1$ and hence the uniqueness of the kernel $K^W$. This result will 
be used in the proof of Lemma \ref{lemma:gamma^1}.

\begin{lemma}
\label{lemma:threshold}
For a given $\Delta := (c_1 - c_2),$ the threshold $\beta_1$ characterizing the kernel
$K^W$ in Theorem \ref{thm:wardrop-cts} is as follows.
\begin{eqnarray}
    \beta_1 & = & \begin{cases} 
       b & \mbox{ ~if~ } \Delta \geq g_1(0)  \mbox{ ~or~ } \Delta \leq g_1(\lambda),\\
       \beta_1(\hat{\gamma})  & \mbox{~if~} g_1(\lambda) < \Delta < g_1(0) 
       \end{cases}      
\end{eqnarray}
where $\hat{\gamma}$ satisfies $\Delta = g_1(\hat{\gamma}).$
For a fixed $\Delta,$ the equilibrium $\gamma_1$ and the corresponding 
$\beta_1$ is unique and this implies the uniqueness of $K^W.$
\end{lemma}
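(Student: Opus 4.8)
The plan is to read Lemma~\ref{lemma:threshold} as an inversion statement for the map $g_1$ established in Lemma~\ref{lemma:gfx}. First I would record the consequences of that lemma: since $g_1$ is continuous and (strictly) monotone decreasing on $[0,\lambda]$, it is a bijection from $[0,\lambda]$ onto the interval $[g_1(\lambda),g_1(0)]$. Evaluating the endpoints using $\beta_1(0)=\beta_1(\lambda)=F^{-1}(1)=b$ together with the standing assumptions $D_1(0)<D_2(\lambda)$ and $D_2(0)<D_1(\lambda)$ gives $g_1(0)=b\big(D_2(\lambda)-D_1(0)\big)>0$ and $g_1(\lambda)=b\big(D_2(0)-D_1(\lambda)\big)<0$, so the three hypotheses $\Delta\ge g_1(0)$, $g_1(\lambda)<\Delta<g_1(0)$, and $\Delta\le g_1(\lambda)$ partition the real line. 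The engine of the proof is Eq.~\eqref{eq:beta_1} of Theorem~\ref{thm:wardrop-cts}, which forces any interior equilibrium (one with $\gamma_1\in(0,\lambda)$, equivalently $\beta_1\in(a,b)$) to satisfy $\Delta = c_1-c_2 = \beta_1\big(D_2(\gamma_2)-D_1(\gamma_1)\big)=g_1(\gamma_1)$.

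For the interior range $g_1(\lambda)<\Delta<g_1(0)$ I would invoke bijectivity of $g_1$ directly: there is exactly one $\hat\gamma\in(0,\lambda)$ with $g_1(\hat\gamma)=\Delta$, and by Lemma~\ref{lemma:betagamma} the associated threshold is $\beta_1(\hat\gamma)$. Injectivity (strict monotonicity) of $g_1$ rules out any second interior solution, while the endpoint inequalities below exclude corner equilibria in this range, so $\gamma_1$, and hence $\beta_1$, is unique.

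The two boundary cases are where the real work lies, and each must be argued as a genuine corner equilibrium. For $\Delta\ge g_1(0)$ I claim $\gamma_1=0$, i.e.\ all traffic goes to Server~2. The test is whether even the most delay-sensitive customer $\beta=b$ prefers Server~2 while facing an empty Server~1, i.e.\ $c_1+bD_1(0)\ge c_2+bD_2(\lambda)$; this rearranges to $\Delta\ge b\big(D_2(\lambda)-D_1(0)\big)=g_1(0)$, which is precisely the hypothesis, so $\gamma_1=0$ is an equilibrium, and by the $c_1>c_2$ branch of Theorem~\ref{thm:wardrop-cts} (with $(\beta_1,b]$ empty) its threshold is $\beta_1=b$. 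Symmetrically, for $\Delta\le g_1(\lambda)$ the customer $\beta=b$ prefers Server~1 exactly when $\Delta\le b\big(D_2(0)-D_1(\lambda)\big)=g_1(\lambda)$, forcing $\gamma_1=\lambda$ and again $\beta_1=b$ from the $c_1<c_2$ branch. Since $g_1(\gamma_1)\le g_1(0)$ on all of $[0,\lambda]$ (resp.\ $g_1(\gamma_1)\ge g_1(\lambda)$), the interior equation $\Delta=g_1(\gamma_1)$ then has no solution strictly inside the range, so the corner is the only equilibrium.

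Finally I would assemble uniqueness: the three hypotheses are mutually exclusive and exhaustive, each pins down a single equilibrium value of $\gamma_1$ (namely $0$, $\hat\gamma$, or $\lambda$), and through Lemma~\ref{lemma:betagamma} and the threshold form of Theorem~\ref{thm:wardrop-cts} this determines $\beta_1$ and therefore the entire kernel $K^W$. I expect the main obstacle to be the boundary analysis rather than the interior one: one must check that the corner assignments are genuinely incentive-compatible Wardrop equilibria (the marginal-customer inequalities above) while simultaneously excluding a competing interior equilibrium, both of which hinge on the endpoint values $g_1(0)$ and $g_1(\lambda)$ and on the strict monotonicity supplied by Lemma~\ref{lemma:gfx}.
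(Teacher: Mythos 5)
Your proposal is correct and follows essentially the same route as the paper: the corner cases are settled by checking the extremal customer $\beta=b$ against $g_1(0)=b\left(D_2(\lambda)-D_1(0)\right)$ and $g_1(\lambda)=b\left(D_2(0)-D_1(\lambda)\right)$, and the interior case by strict monotonicity of $g_1$ from Lemma~\ref{lemma:gfx}. The only (immaterial) difference is that the paper directly verifies the Wardrop inequalities on either side of the candidate threshold $\beta_1(\hat{\gamma})$, whereas you infer that the interior candidate is the equilibrium from the existence guaranteed by Theorem~\ref{thm:wardrop-cts} together with the exclusion of the corners.
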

 \begin{proof}
 See Appendix for proof.
\end{proof}

We now characterize $\gamma^1(c_2)$ in the following lemma.

\begin{lemma}
\label{lemma:gamma^1}

\begin{eqnarray}
    \gamma^1(c_2) & = & \begin{cases} 
       \lambda & \mbox{ for } c_2 \geq - g_1(\lambda),\\
       \gamma : g_1(\gamma) = -c_2 & 
\mbox{ for } c_2 < - g_1(\lambda) \end{cases} 
\end{eqnarray}

In words, when $c_2<-g_1(\lambda)$ we have $\gamma^1(c_2) = 
\left\lbrace \gamma : g_1(\gamma) = -c_2\right\rbrace$ and for any $c_1 \geq 0,$
the equilibrium $\gamma_1 \notin (\gamma^1(c_2),\lambda].$
However when $c_2 \geq - g_1(\lambda),$ we have $\gamma^1(c_2) = \lambda$ 
in which case for suitable choices of $c_1,$ $\gamma_1 \in [0,\lambda].$
\end{lemma}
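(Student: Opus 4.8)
The plan is to obtain $\gamma^1(c_2)$ as the \emph{largest} equilibrium arrival rate to Server~1 that can be induced, for a fixed $c_2$, as the price $c_1$ is varied over its admissible range $[0,c^1]$. The enabling observation is that, by Theorem~\ref{thm:wardrop-cts} (Eq.~\eqref{eq:beta_1}) together with Lemma~\ref{lemma:threshold}, the equilibrium $\gamma_1$ depends on the prices only through the difference $\Delta := c_1 - c_2$, and for each $\Delta$ this equilibrium value is \emph{unique}. Hence, holding $c_2$ fixed and letting $c_1$ range over $[0,c^1]$ is the same as letting $\Delta$ range over $[-c_2,\,c^1-c_2]$, and the feasible set of equilibrium values of $\gamma_1$ is exactly the image of this interval under the single-valued map $\Delta \mapsto \gamma_1(\Delta)$.

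I would next make this map explicit by combining Lemma~\ref{lemma:threshold} with Lemma~\ref{lemma:gfx}. Since $g_1$ is continuous and decreasing with $g_1(\gamma^+)=0$ and $0<\gamma^+<\lambda$, we have $g_1(0)>0>g_1(\lambda)$. Reading Lemma~\ref{lemma:threshold} through the definition of $\gamma_1$ (and disambiguating the case $\beta_1=b$ via Lemma~\ref{lemma:compare_gamma_c}: $\Delta\ge g_1(0)>0$ forces $c_1>c_2$, hence $\gamma_1=0$, whereas $\Delta\le g_1(\lambda)<0$ forces $c_1<c_2$, hence $\gamma_1=\lambda$), the induced rate is $\gamma_1(\Delta)=\lambda$ for $\Delta\le g_1(\lambda)$, $\gamma_1(\Delta)=0$ for $\Delta\ge g_1(0)$, and $\gamma_1(\Delta)=\hat\gamma\in(0,\lambda)$ solving $g_1(\hat\gamma)=\Delta$ for $g_1(\lambda)<\Delta<g_1(0)$. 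In particular, $\Delta\mapsto\gamma_1(\Delta)$ is non-increasing, so the maximal feasible $\gamma_1$ is attained at the smallest admissible difference $\Delta=-c_2$, i.e.\ at $c_1=0$.

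It then remains to evaluate $\gamma_1(\Delta)$ at $\Delta=-c_2$ and split into the two cases of the statement. If $-c_2\le g_1(\lambda)$, equivalently $c_2\ge -g_1(\lambda)$, then $\gamma_1(-c_2)=\lambda$, giving $\gamma^1(c_2)=\lambda$; for such $c_2$ every $\gamma_1\in[0,\lambda]$ is realized by a suitable $c_1\ge 0$. Otherwise $-c_2>g_1(\lambda)$, i.e.\ $c_2<-g_1(\lambda)$; here, because prices are nonnegative we have $-c_2\le 0<g_1(0)$, so $g_1(\lambda)<-c_2<g_1(0)$ places us in the middle regime and $\gamma^1(c_2)$ is the unique solution of $g_1(\gamma)=-c_2$. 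Finally, any $\gamma_1>\gamma^1(c_2)$ would, by monotonicity of the map, require $\Delta<-c_2$, i.e.\ $c_1<0$, which is infeasible; thus the equilibrium $\gamma_1\notin(\gamma^1(c_2),\lambda]$ for every $c_1\ge 0$, as claimed.

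The step I expect to be the main obstacle is the careful bookkeeping at the boundary value $\beta_1=b$, where Lemma~\ref{lemma:threshold} by itself does not distinguish $\gamma_1=0$ from $\gamma_1=\lambda$; resolving this requires feeding in the sign of $\Delta$ through Lemma~\ref{lemma:compare_gamma_c}. The rest is routine once uniqueness (Lemma~\ref{lemma:threshold}) guarantees that $\gamma_1$ is a genuine, monotone function of $\Delta$, and once one notes that $c_2\ge 0$ automatically excludes the regime $\Delta\ge g_1(0)$ at $c_1=0$.
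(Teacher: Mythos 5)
Your proposal is correct and follows essentially the same route as the paper: fix $c_2$, observe that the equilibrium $\gamma_1$ is a monotone function of $\Delta = c_1 - c_2$ via Lemma~\ref{lemma:threshold} and the monotonicity of $g_1$, so the maximal attainable rate occurs at $c_1=0$, and then split on whether $-c_2 \le g_1(\lambda)$. Your explicit disambiguation of the $\beta_1=b$ boundary (via Lemma~\ref{lemma:compare_gamma_c}) and the remark that $c_2\ge 0$ rules out $\Delta \ge g_1(0)$ at $c_1=0$ are slightly more careful than the paper's write-up, but the argument is the same.
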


 \begin{proof}
 See Appendix for proof.
\end{proof}

The above lemma also implies that, 
if $c_{2} \geq -g_1(\lambda),$
then for any $c_1 \in (0,c_2 +g_1(\lambda))$ the equilibrium 
$\gamma_1$ satisfies $\gamma_1 = \lambda.$ On the other hand,
if the parameters 
of the system are such that $c_2 + g_1(\lambda) < 0,$ then 
for any set of admission prices $c_1$ at Server~1, we have
$\gamma_1 <  \gamma^1(c_2).$

\begin{remark}
\label{rem:g1+}
From Lemma \ref{lemma:gamma^1}, when $c_2 < - g_1(\lambda),$ 
we have $\gamma^1(c_2) = \gamma$ where $g_1(\gamma) = -c_2.$
From Lemma \ref{lemma:gfx}, we know that $g_1(\gamma) < 0$ for 
$\gamma > \gamma^+.$ Hence when $c_2 \geq 0,$ we have  
$\gamma^1(c_2) \geq \gamma^+$ with strict equality when 
$c_2 = 0.$ 
\end{remark}

Finally, we have the following theorem to express
$c_1$ as a function of $\gamma_1,$ denoted by $c_1(\gamma_1).$ 

\begin{theorem}
\label{thm:cgamma}
$c_1(\gamma_1) = c_2 + g_1(\gamma_1)$ for $0 < \gamma_1 < \gamma^1(c_2) \leq \lambda.$
For $\gamma_1 = 0,$  $c_1(0)$ must be at least equal to $c_2 + g_1(0),$
i.e., $c_1(0) \geq c_2 + g_1(0).$ Similarly when $\gamma^1(c_2) = \lambda,$
we have $c_1(\lambda) \leq c_2 + g_1(\lambda).$
\end{theorem}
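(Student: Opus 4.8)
The plan is to treat the interior regime $0 < \gamma_1 < \gamma^1(c_2)$ and the two endpoint values $\gamma_1 = 0$ and $\gamma_1 = \lambda$ separately, relying throughout on the uniqueness established in Lemma \ref{lemma:threshold} and on the strict monotonicity of $g_1$ from Lemma \ref{lemma:gfx}.

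First I would handle the interior. For $0 < \gamma_1 < \gamma^1(c_2) \leq \lambda$, the corresponding threshold $\beta_1(\gamma_1)$ is strictly interior, i.e. $\beta_1 \in (a,b)$: this follows from Eq. \eqref{eq:betagamma}, since the argument of $F^{-1}$ lies strictly inside $(0,1)$ whenever $\gamma_1 \in (0,\lambda) \setminus \{\gamma^+\}$, while $\beta_1(\gamma^+) = F^{-1}((\lambda - \gamma^+)/\lambda) \in (a,b)$ as well because $0 < \gamma^+ < \lambda$. Consequently Eq. \eqref{eq:beta_1} of Theorem \ref{thm:wardrop-cts} applies (for both $c_1 > c_2$ and $c_1 < c_2$), giving $c_1 - c_2 = \beta_1(\gamma_1)(D_2(\lambda - \gamma_1) - D_1(\gamma_1))$, which is precisely $g_1(\gamma_1)$ by its definition in Eq. \eqref{eq:ggamma}. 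Since Lemma \ref{lemma:gfx} shows $g_1$ is continuous and strictly decreasing, the map $\gamma_1 \mapsto g_1(\gamma_1)$ is a bijection onto its range; hence the admission price inducing equilibrium arrival rate $\gamma_1$ is the single value $c_1(\gamma_1) = c_2 + g_1(\gamma_1)$, with the uniqueness of the equilibrium tied to this $\Delta = c_1 - c_2$ guaranteed by Lemma \ref{lemma:threshold}.

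Next I would address the endpoint $\gamma_1 = 0$. Here all customers route to Server~2, and by Lemma \ref{lemma:threshold} the equilibrium threshold equals $b$ precisely when $\Delta \geq g_1(0)$. Thus every admission price $c_1$ that induces $\gamma_1 = 0$ must satisfy $c_1 - c_2 \geq g_1(0)$, yielding the claimed bound $c_1(0) \geq c_2 + g_1(0)$. The endpoint $\gamma_1 = \lambda$ (possible only when $\gamma^1(c_2) = \lambda$, i.e. $c_2 \geq -g_1(\lambda)$ by Lemma \ref{lemma:gamma^1}) is symmetric: again by Lemma \ref{lemma:threshold} the threshold equals $b$ when $\Delta \leq g_1(\lambda)$, so any $c_1$ inducing $\gamma_1 = \lambda$ obeys $c_1 - c_2 \leq g_1(\lambda)$, giving $c_1(\lambda) \leq c_2 + g_1(\lambda)$.

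The only delicate point is the transition from an equality in the interior to inequalities at the endpoints. The reason is that $\gamma_1 \mapsto g_1(\gamma_1)$ is one-to-one on $(0,\lambda)$, but its inverse degenerates at the boundary: an entire half-line of $\Delta$ values ($\Delta \geq g_1(0)$, respectively $\Delta \leq g_1(\lambda)$) all collapse to the single equilibrium $\gamma_1 = 0$ (respectively $\gamma_1 = \lambda$), because once a server has been emptied, further price changes cannot alter the routing. This degeneracy is exactly what Lemma \ref{lemma:threshold} encodes, so no additional computation is needed beyond invoking it; I expect this endpoint behavior to be the main conceptual—rather than technical—obstacle.
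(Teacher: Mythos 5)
Your proposal is correct and follows essentially the same route as the paper: split into the interior case (where $\beta_1 \in (a,b)$ forces $\Delta = g_1(\gamma_1)$ via the Wardrop indifference condition and Lemma \ref{lemma:threshold}) and the two endpoint cases (where $\beta_1 = b$ only pins down the one-sided inequalities $\Delta \geq g_1(0)$ or $\Delta \leq g_1(\lambda)$). Your added remark on why the endpoints degenerate to inequalities is consistent with, and slightly more explicit than, the paper's own argument.
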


\begin{proof}
First consider a fixed $\gamma_1$ satisfying 
 $\gamma_1 \in (0,\gamma^1(c_2))$ for a fixed $c_2.$
 ($\gamma^1(c_2)$ was characterized in Lemma \ref{lemma:gamma^1}.)
 The corresponding threshold $\beta_1$ is determined by 
 Eq. \eqref{eq:betagamma} and hence  we have 
 $\beta_1 \in (a,b)$ for  $\gamma_1 \in (0,\gamma^1(c_2)).$
Recall that Lemma \ref{lemma:threshold} relates the 
threshold $\beta_1$ with $\Delta.$ Since $\beta_1 < b,$
 from Lemma \ref{lemma:threshold}, $\Delta$ must satisfy
 $\Delta = g_1(\gamma_1).$ Therefore for a fixed $c_2,$
 the admission price $c_1(\gamma_1)$ resulting in the
 arrival rate of $\gamma_1$ at Server~1 is given by 
 \begin{equation*}
  c_1(\gamma_1) = c_2 + g_1(\gamma_1).
   \end{equation*}

   For the case $\gamma_1 = 0,$ from Eq. \eqref{eq:betagamma}, we have 
   $\beta_1 = b.$ From Lemma \ref{lemma:threshold}, this implies that
   $\Delta \geq g_1(0).$ From the definition of $\Delta,$ we have 
   $c_1(0) \geq c_2 + g_1(0).$
   Similarly when $\gamma_1 = \lambda,$ from Eq. \eqref{eq:betagamma},
   we have $\beta_1 = b.$ From Lemma \ref{lemma:threshold}, this implies 
   $\Delta \leq g_1(\lambda)$ and hence $c_1(\lambda) \leq c_2 + g_1(\lambda).$
   This completes the proof.
\end{proof}
 
In the above theorem, $c_1(0)$ and $c_1(\lambda)$ are not uniquely 
defined and can take values that satisfy  $c_1(0) \geq c_2 + g_1(0)$
and $c_1(\lambda) \leq c_2 + g_1(\lambda)$ respectively.
As convention, we henceforth define $c_1(0) = c_2 + g_1(0)$
and $c_1(\lambda) = c_2 + g_1(\lambda).$ Further note that 
the domain for $c_1(\cdot)$ is $0 \leq \gamma_1 \leq \gamma^1(c_2)$
and for $\gamma^1(c_2) <\gamma_1 < \lambda,$ $c_1(\gamma_1)$ is 
undefined. The function $c_1(\gamma_1)$
for $0 \leq \gamma_1 \leq \gamma^1(c_2) \leq \lambda$ 
can now be expressed as follows.

\begin{eqnarray}
\label{eq:cgamma}
    c_1(\gamma_1) & = & \begin{cases} 
       c_2 + g_1(\gamma_1) & \mbox{ for } 0 < \gamma_1 \leq \gamma^1(c_2) < \lambda,\\
       c_2 + g_1(0) & \mbox{ for } \gamma_1 = 0, \\
       c_2 + g_1(\lambda) & \mbox{ for }   \gamma_1 = \gamma^1(c_2) = \lambda,
       %- \infty & \mbox{ otherwise }
\end{cases} 
\end{eqnarray}

 Now recall the revenue maximization problem $\ref{prog:rev_mono_new}.$
 Define $\gamma^*_1$ as the optimizer for this program with the
 revenue maximizing admission price given by $c_1(\gamma^*_1).$
 Since $R_T (c_1(\gamma_1),\gamma_1)= c_2\lambda +  (c_1(\gamma_1) 
 - c_2)\gamma_1,$ $\gamma^*_1$ must be such that $c_1(\gamma^*_1) > c_2.$ 
 % the domain for 
% From Lemma 
%  \ref{lemma:cgamma}, this implies that  $0 < \gamma^*_1 < \gamma^+.$ 
From Eq. \eqref{eq:cgamma}, this implies that $g_1(\gamma_1^*) > 0.$
From Lemma \ref{lemma:gfx} we have $g_1(\gamma_1) > 0$ for
$\gamma_1 \in (0,\gamma^+)$ and this implies that 
$\gamma_1^* \in (0, \gamma^+).$ The term $c_2\lambda$
in $R_T (c_1(\gamma_1),\gamma_1)$ is a constant and hence we have  
the following equivalent program for the revenue maximization problem.
 
\begin{equation}
    \label{prog:equi_mono}
  \tag{P4}  
 \begin{aligned}
      & \underset{\gamma_1}{\max}
      & & g_1(\gamma_1)\gamma_1 \\
      & \text{subject to} & &  0 \leq \gamma_1 \leq \gamma^+
    \end{aligned}
  \end{equation}
 where $g_1(\gamma_1)$ is given by Eq. \eqref{eq:ggamma}.
 
Note from Lemma \ref{lemma:gfx} that $g_1(\cdot)$
is a continuous function of its domain. Program \ref{prog:equi_mono}
involves maximizing a continuous function over a compact
set and hence a maximizer $\gamma_1^*$ exists.
The original monopoly program \ref{prog:rev_mono} has been
significantly simplified to the equivalent program
\ref{prog:equi_mono}. Since $g_1(\gamma_1)$ is strictly 
decreasing (and hence quasi-convex),
$g_1(\gamma_1)\gamma_1$ is in fact a product of two quasi-convex
functions. (However product of quasi-convex functions need not be
quasi-convex function). One can now use standard non-linear
optimization techniques to obtain $\gamma_1^*.$
% Non-convex programs such as 
% \ref{prog:equi_mono} are in general difficult to analyze.
% Further recall that our assumptions on the distribution function $F$ and
% the delay cost function $D_j$ for $j = 1,2$ are very minimal.
%which makes characterizing $g_1(\gamma_1)$ difficult.
To further understand Program \ref{prog:equi_mono}, we perform a 
numerical evaluation of $g_1(\gamma_1)\gamma_1 $ 
under a combination of assumptions on the distribution 
functions $F$ and the delay functions $D_j(\gamma_j)$ that 
were outlined earlier.

\begin{figure}
  \begin{center}
    \includegraphics[height=2in]{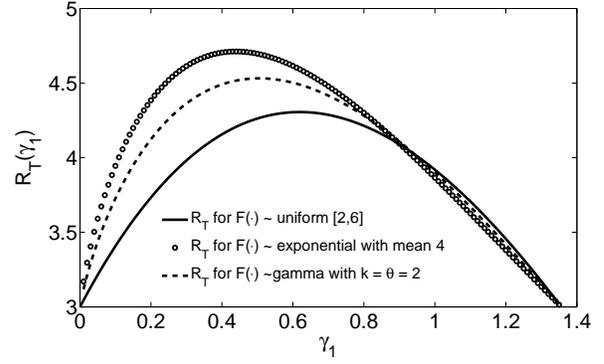}
  \end{center}
  \caption{$R_T$ as a function of $\gamma_1$ when $D_j(\gamma_j) = \frac{\gamma_j}{\mu_j}$}
  \label{fig:ex1}
\end{figure}

\textbf{Example 1:} In this example we shall assume that the 
$D_j(\gamma_j) = \frac{\gamma_j}{\mu_j}$ for $j = 1,2.$ We 
assume that $\mu_1 = 3.3$ and $\mu_2 = 4.$ Further, the 
arrival rate $\lambda = 3$ and we consider the 
following three examples for the distribution $F(\cdot).$ 
(1) $F$ has a uniform distribution with support on $[2,6].$
(2) $F$ has an exponential distribution with mean $\tau = 4$ and  
(3) $F$ has a Gamma distribution with the scale $k$ and shape $\theta$
parameters $2$ and $2$ respectively. Note that $\bm{\beta}$ with these three 
distributions have the same mean. We plot 
$R_T(c_1(\gamma_1),\gamma_1) = c_2 \lambda + g_1(\gamma_1)\gamma_1$
as a function of $\gamma_1$ in Fig.~\ref{fig:ex1} 
where we assume $c_2 = 1.$ When $F$ has the uniform distribution, 
$\gamma_1^* = 0.62.$ The optimal revenue rate  $R_T(\gamma_1^*) = 4.306$ while
the admission price $c_1(\gamma_1^*)$ maximizing $R_T$ is 3.106. 
The corresponding values for the exponential distribution are 
$\gamma_1^* = 0.44, R_T(\gamma_1^*) = 4.712$ and $c_1(\gamma_1^*) = 4.89$
while the values for gamma distribution are 
$\gamma_1^* = 0.51, R_T(\gamma_1^*) = 4.532$ and $c_1(\gamma_1^*) = 4$.

\textbf{Example 2:} In this example, we assume that 
$D_j(\gamma_j) = \frac{1}{\mu_j - \gamma_j}$ where again 
$\mu_1 = 3.3$ and $\mu_2 = 4.$ Note that $\lambda < \mu_j$
for $j = 1,2.$ The choice of $F(\cdot)$ is as in the previous 
example. A plot of $R_T(c_1(\gamma_1),\gamma_1)$ as a function 
of $\gamma_1$ is provided in Fig. \ref{fig:ex2}.  
When $F$ has the uniform distribution, 
$\gamma_1^* = 0.48.$ The optimal revenue rate  $R_T(\gamma_1^*) = 3.83$ while
the admission price $c_1(\gamma_1^*)$ maximizing $R_T$ is $2.72$. 
The corresponding values for the exponential distribution are 
$\gamma_1^* = 0.33, R_T(\gamma_1^*) = 4.21$ and $c_1(\gamma_1^*) = 4.67$
while the values for gamma distribution are 
$\gamma_1^* = 0.38, R_T(\gamma_1^*) = 4.04$ and $c_1(\gamma_1^*) = 3.74$.

\begin{figure}
  \begin{center}
    \includegraphics[height=2in]{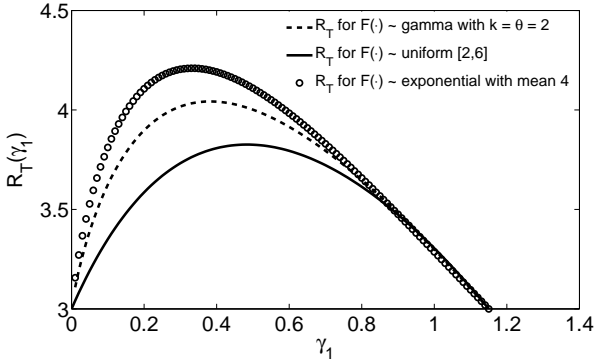}
  \end{center}
  \caption{$R_T$ as a function of $\gamma_1$ when $D_j(\gamma_j) = \frac{1}{\mu_j - \gamma_j}$}
  \label{fig:ex2}
\end{figure}

We conclude the analysis of the revenue maximization problem 
with the following observations made from the two examples given
above.
\begin{itemize}
 \item Firstly, we see that for the given examples of $F,$  
$R_T(c_1(\gamma_1),\gamma_1)$ is a unimodal function
in $\gamma_1.$ For the three distribution functions, 
it can be shown that $F^{-1}(\cdot)$ is differentiable 
in its arguments. For such distribution functions with 
differentiable $F^{-1}(\cdot),$ this implies that $g_1(\gamma_1)$
and hence $R_T(c_1(\gamma_1),\gamma_1)$ is differentiable in
$\gamma_1$ when $0 < \gamma_1 <\gamma^+.$ From 
Rolle's Theorem (Theorem 10.2.7 \cite{Tao06}), this 
implies that there exists a $\gamma_1 \in (0,\gamma^+)$
such that $\frac{dR_T}{d\gamma_1} = 0.$ A $\gamma_1^*$ 
satisfying this equation is the revenue maximizing arrival
rate to server~1. The admission price corresponding to 
this $\gamma_1^*$ can now be obtained using Eq. \eqref{eq:cgamma}.

\item For each of the three distributions, note that we have 
$\EXP{\bm{\beta}} = 4.$ However, the Revenue rate $R_T$ as
a function of $\gamma_1$ is distinct in all the three cases.
This implies that the revenue rate $R_T$ depends 
on the higher  moments of the distribution $F$ and not just on
its mean value.

\item Finally, note that $R_T$ depends on admission price through the 
addition factor of $c_2 \lambda.$ For different values of $c_2,$ 
the corresponding $\gamma_1^*$ does not change. However it is 
easy to see from Eq. \eqref{eq:cgamma} that $c_1(\gamma_1)$
increases linearly in $c_2.$ 

\end{itemize}

\section{Duopoly}
\label{sec:duopoly}

In this section, we shall consider program \ref{prog:rev_duo_new}
for revenue maximization in the duopoly system. Much of the 
analysis in this section follows from that of the previous section.
Let $\gamma_j, j=1,2$ denote the optimization variable and represent
the admission prices at the respective servers as a function of 
the arrival rates. Towards this, we continue with the use of the
notation $c_j(\gamma_j)$
for $j = 1,2.$ Note that while in the monopoly case, the admission
price $c_2$ was considered fixed, in the duopoly of this section, 
it is the strategy for the second server and hence will
not be a constant. The revenue function for 
Server~$j$ is given by
\begin{equation*}
 R_j(c_j(\gamma_j), \gamma_j) = c_j(\gamma_j) \gamma_j
\end{equation*}
where $c_j(\gamma_j)$ represents the admission price at Server $j$
resulting in an equilibrium arrival rate of $\gamma_j.$ As 
noted in the previous section, $c_j(\gamma_j)$ is a function 
of $c_{j^-},$ the admission price at the other server. 
% Further, recall that the revenue maximization problem for 
% $j = 1,2$ is now as follows 
% \begin{equation*}
%    \tag{P4}  
%  \begin{aligned}
%       & \underset{\gamma_j}{\max}
%       & & R_j(c_j(\gamma_j), \gamma_j) = c_j(\gamma_j)\gamma_j \\
%       & \text{subject to} & &  0 \leq \gamma_j \leq \gamma^j(c_{j^-}) \leq \lambda \\
%       & \text{given} & & c_{j^-}.
%     \end{aligned}
%   \end{equation*}  
For a fixed strategy $c_2$ at Server~$2,$ from Eq. \eqref{eq:cgamma} the revenue 
function $ R_1(c_1(\gamma_1), \gamma_1)$ can be redefined as 
\begin{equation}
\label{eq:R1_new}
R_1(c_1(\gamma_1), \gamma_1) = \left(g_1(\gamma_1) + c_2\right)\gamma_1.
\end{equation}
It can be argued as in the previous section that for a fixed $c_1$ 
\begin{equation}
\label{eq:R2_new}
R_2(c_2(\gamma_2), \gamma_2) = \left(g_2(\gamma_2) + c_1\right)\gamma_2
\end{equation}
where 
\begin{equation}
\label{eq:ggamma2}
g_2(\gamma_2) = \beta_1(\lambda - \gamma_2)\left(D_1(\lambda - \gamma_2) - D_2(\gamma_2)\right)
\end{equation}
where from Eq. \eqref{eq:betagamma} $\beta_1(\lambda - \gamma_2)$ 
is as follows
\begin{eqnarray}
\label{eq:betagamma2}
    \beta_1(\lambda - \gamma_2) & = & \begin{cases} 
       F^{-1}\left(\frac{\gamma_2}{\lambda}\right) & 
       \mbox{ for } \lambda - \gamma^+ \leq \gamma_2 \leq \lambda,\\
      F^{-1} \left(\frac{\lambda - \gamma_2}{\lambda}\right) & \mbox{ for } 
        0 < \gamma_2  <  \lambda - \gamma^+.
    \end{cases} 
\end{eqnarray}
It is easy to see that $g_2(\gamma_2)$ is also continuous 
and strictly decreasing in $\gamma_2.$ Further, $g_2(\gamma_2) = 0$ when
$\gamma_2 = \lambda - \gamma^+.$ The revenue maximization
problem for the duopoly is re-stated as follows.

\begin{equation*}
     \label{prog:rev_duo_new2}
   \tag{P7}  
 \begin{aligned}
      & \underset{\gamma_j}{\max}
      & &R_j(\gamma_j) =  \left(g_j(\gamma_j) + c_{j^-}\right)\gamma_j \\
      & \text{subject to} & &  0 \leq \gamma_j \leq \gamma^j(c_{j^-}) \leq \lambda \\
      & \text{given} & & c_{j^-}.
    \end{aligned}
  \end{equation*}  

 For a given $c_{j^-},$ recall that $\gamma_j^*(c_{j^-})$ denotes the maximizer of program 
 \ref{prog:rev_duo_new} and hence of the above program.  Also recall that 
 $\hat{B}_j(c_{j^-})$ denotes the best response admission price at 
 Server~$j$ in response to the admission price $c_{j^-}$ at the other facility.
 Then the Nash equilibrium set of admission prices, denoted by $(c^*_1, c^*_2),$ 
 is characterized as follows.
 
 \begin{eqnarray}
 \label{eq:nash}
  (c_1^*, c_2^*) = \left\lbrace (c_1, c_2) : \hat{B}_1(c_2) = c_1, \hat{B}_2(c_1) = c_2\right\rbrace,
 \end{eqnarray} 
  where $\hat{B}_j(c_{j^-}) = g_j(\gamma_j^*(c_{j^-})) + c_{j^-}$ for $j=1,2.$
 
 We begin the analysis for the duopoly problem by first identifying that 
 $\gamma_j^*(c_{j^-})$ lies in the interior of the domain. We have the following lemma.
\begin{lemma}
\label{lem:gamma_1^*}
 $\gamma_j^*(c_{j^-}) \notin \left\lbrace 0, \gamma^j(c_{j^-}) \right\rbrace.$
\end{lemma}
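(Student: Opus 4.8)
The plan is to exclude each endpoint of the feasible interval $[0,\gamma^j(c_{j^-})]$ by exhibiting a feasible $\gamma_j$ at which the revenue $R_j(\gamma_j)=\bigl(g_j(\gamma_j)+c_{j^-}\bigr)\gamma_j$ is strictly larger. For the left endpoint this is immediate: $R_j(0)=0$, while by Lemma~\ref{lemma:gfx} (and the stated analogue for $g_2$) the function $g_j$ is continuous, strictly decreasing, and vanishes at an interior point ($\gamma^+$ for $j=1$, $\lambda-\gamma^+$ for $j=2$). Hence $g_j(\gamma_j)>0$ for every sufficiently small $\gamma_j>0$, and since $c_{j^-}\ge 0$ the bracket is strictly positive there, giving $R_j(\gamma_j)>0=R_j(0)$. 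Thus $0$ is not a maximizer.

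For the right endpoint I would split according to the description of $\gamma^j(c_{j^-})$ in Lemma~\ref{lemma:gamma^1}. When $c_{j^-}<-g_j(\lambda)$ the lemma gives $g_j\bigl(\gamma^j(c_{j^-})\bigr)=-c_{j^-}$, so the endpoint bracket $g_j(\gamma^j(c_{j^-}))+c_{j^-}$ equals $0$ and $R_j(\gamma^j(c_{j^-}))=0$. The same interior point used for the left endpoint then strictly dominates, excluding the right endpoint as well. This is exactly the regime in which a symmetric equilibrium is expected to lie, so it is the case that matters for the sequel.

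The main obstacle is the complementary case $\gamma^j(c_{j^-})=\lambda$, which by Lemma~\ref{lemma:gamma^1} arises when $c_{j^-}\ge -g_j(\lambda)$. Now the endpoint price $c_j(\lambda)=c_{j^-}+g_j(\lambda)$ is nonnegative, so $R_j(\lambda)$ may be positive and the zero-revenue argument no longer applies. Here I would pass to first-order information, which is legitimate because the distributions under consideration have differentiable $F^{-1}$ and hence differentiable $g_j$. From $R_j'(\gamma_j)=\gamma_j\,g_j'(\gamma_j)+g_j(\gamma_j)+c_{j^-}$ we obtain $R_j'(\lambda)=\lambda\,g_j'(\lambda)+g_j(\lambda)+c_{j^-}$, and since $g_j'<0$ the congestion term $\lambda\,g_j'(\lambda)$ is strictly negative; it forces $R_j'(\lambda)<0$ --- and hence a strictly better feasible point just inside $\lambda$ --- precisely when $c_{j^-}<\lambda\,|g_j'(\lambda)|-g_j(\lambda)$. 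The delicate issue is that for very large $c_{j^-}$ the price term can overwhelm the marginal congestion cost, so the endpoint can in principle be optimal; a complete argument must therefore restrict to the range of $c_{j^-}$ that the competing servers can actually post in equilibrium, where this threshold is respected, rather than allowing arbitrary $c_{j^-}$.
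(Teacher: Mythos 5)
Your treatment of the left endpoint and of the right endpoint in the regime $c_{j^-}<-g_j(\lambda)$ is correct, and in fact more explicit than the paper's own proof (the paper never separately addresses the case $\gamma^j(c_{j^-})<\lambda$, where, as you note, $R_j(\gamma^j(c_{j^-}))=0$ makes the exclusion immediate). The genuine gap is exactly where you flag it: when $\gamma^j(c_{j^-})=\lambda$ and $c_{j^-}$ is large, the endpoint $\lambda$ really can be the global maximizer of the single-server problem --- for $c_{j^-}$ exceeding $\sup_{\gamma}\lvert \gamma\,g_j'(\gamma)+g_j(\gamma)\rvert$ one has $R_j'>0$ on the whole interval, so $R_j$ is increasing and the maximum sits at $\lambda$. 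Your first-order condition at $\lambda$ therefore cannot close this case, and the lemma as a statement about an arbitrary fixed $c_{j^-}\ge 0$ is simply not provable; some restriction on $c_{j^-}$ is unavoidable, as you correctly suspect.

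The paper closes this case by a different, two-server argument rather than by analyzing $R_j$ near $\lambda$: it reads the lemma in the equilibrium context, imposes the coupling $\gamma_1^*=\lambda-\gamma_2^*$, and observes that $\gamma_1^*=\lambda$ forces $\gamma_2^*=0$, whence Server~2 earns zero revenue; Server~2 can then profitably deviate (by Theorem~\ref{thm:cgamma} its price would have to be at least $c_1+g_2(0)>0$, yet charging slightly more than $c_1$ already captures a positive arrival rate and positive revenue), a contradiction. In other words, the endpoint $\lambda$ for one server is excluded by applying your own zero-revenue argument to the \emph{other} server, not to the server whose problem is being maximized. This is the missing idea in your proposal. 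It buys the paper a complete argument at the cost of only establishing the lemma at candidate Nash points rather than for every $c_{j^-}$; since the lemma is used downstream (Theorem~\ref{thm:symmetric_nash}) only to justify the first-order condition at equilibrium, that weaker form suffices. If you want to keep your purely single-server route, you must explicitly restrict $c_{j^-}$ to the range of prices that can arise as a best response, which is essentially the same restriction in disguise.
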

 \begin{proof}
 See Appendix for proof.
\end{proof}

 For a given $c_{j^-}$ since  $\gamma_j^*(c_{j^-})$ lies in the interior
 of the domain,  $\gamma_j^*(c_{j^-})$ satisfies $\frac{dR_j}{d\gamma_j}\bigg\vert_{\gamma_j = \gamma_j^*}  = 0$
 and $\frac{d^2R_j}{d\gamma^2_j}\bigg\vert_{\gamma_j = \gamma_j^*} \leq 0.$ 
 
 Define $S_j(c_{j^-}):= \left\lbrace \gamma_j : \frac{dR_j}{d\gamma_j}  = 0, 
 \frac{d^2R_j}{d\gamma^2_j} \leq 0\right\rbrace$. Then $\gamma_j^*(c_{j^-})$
 is obtained as a solution to the following.

 \begin{equation*}
     \label{prog:gamma^*}
   \tag{P8}  
 \begin{aligned}
      \gamma_j^*(c_{j^-})  = & \underset{\gamma_j \in S_j(c_{j^-})}{\arg\max}
      & &R_j(\gamma_j). \\
      \end{aligned}
  \end{equation*}  

 From the above discussion, it should be clear that obtaining the closed form expression for 
 $(c_1^*, c_2^*)$ satisfying the simultaneous equations of \eqref{eq:nash} is, in general,
 not easy. Note that our analysis till now makes minimal assumptions on the distribution function
 $F$ or on the delay function $D_j(\cdot).$ For certain choices of these functions,
 it may be difficult to obtain a closed form expression for $\gamma_j^*(c_{j^-}).$ 
 The objective function $R_j$ also need not be a concave function. In that case, a brute 
 force search among all the local maxima points needs to be carried out to choose the 
 right $\gamma_j^*(c_{j^-}).$ Instead of satisfying ourselves with some numerical examples,
in the following subsection we shall analyze the Nash equilibrium under the restriction 
that the two servers are identical i.e.,  the average delay at any queue is the same for 
the same arrival rate. Under this setting, our interest is to characterize the symmetric
Nash equilibrium such that $c_1^* = c_2^*.$ 

\subsection{Characterizing a symmetric Nash equilibrium} 
In this section we shall characterize the necessary conditions 
for the existence of a symmetric Nash equilibrium, i.e., 
$(c_1^*, c_2^*)$ where $c_1^* = c_2^*:= c^*.$ A natural scenario
where such an equilibrium is possible is when the two servers have
identical delay functions. In this section, we restrict to this 
case and assume that $D_j(\cdot) = D(\cdot)$ for $j = 1,2.$
As the service systems are identical in their delay characteristics,
it is desirable to identify conditions for existence of a 
symmetric Nash equilibrium. We begin with the following definition.
Define $\alpha_1, \alpha_2$ as follows.
\begin{eqnarray}
 \label{eq:alpha} 
 \alpha_1 &=& -\gamma^+ \frac{dg_1(\gamma_1)}{\gamma_1}\bigg\vert_{\gamma_1 = \gamma^+}  \nonumber \\
 \alpha_2 &=& -\gamma^+ \frac{dg_2(\gamma_2)}{\gamma_2}\bigg\vert_{\gamma_2 = \gamma^+} \nonumber \\
 \end{eqnarray}
Based on these definitions, we have the following lemma.

\begin{lemma}
\label{lem:alpha}
 $\alpha_1 = \alpha_2.$
\end{lemma}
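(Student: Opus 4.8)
The plan is to exploit the identical-server assumption to collapse both $g_1$ and $g_2$ into expressions built from a single delay function, and then to show that $\alpha_1$ and $\alpha_2$ reduce to the same closed-form quantity evaluated at $\gamma^+$. First I would record that under $D_1(\cdot)=D_2(\cdot)=D(\cdot)$ the defining relation $D_1(\gamma^+)=D_2(\lambda-\gamma^+)$ becomes $D(\gamma^+)=D(\lambda-\gamma^+)$, which, since $D$ is strictly increasing and hence injective, forces $\gamma^+=\lambda/2$ (as already noted in the Remark). Substituting $D_1=D_2=D$ into \eqref{eq:ggamma} and \eqref{eq:ggamma2} gives
\[
 g_1(\gamma_1)=\beta_1(\gamma_1)\,h(\gamma_1),\qquad
 g_2(\gamma_2)=\beta_1(\lambda-\gamma_2)\,h(\gamma_2),
\]
where $h(\gamma):=D(\lambda-\gamma)-D(\gamma)$ is the \emph{same} continuously differentiable function in both cases, with $h(\gamma^+)=D(\lambda/2)-D(\lambda/2)=0$ and $h'(\gamma^+)=-D'(\lambda-\gamma^+)-D'(\gamma^+)=-2D'(\lambda/2)$.

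The crux is to compute $g_1'(\gamma^+)$ and $g_2'(\gamma^+)$ even though, by the Remark, $\beta_1$ need not be differentiable at $\gamma^+$. The key observation I would use is that because $h(\gamma^+)=0$, the non-differentiable factor drops out of the difference quotient:
\[
 g_1'(\gamma^+)=\lim_{\epsilon\to 0}\frac{\beta_1(\gamma^++\epsilon)\,h(\gamma^++\epsilon)}{\epsilon}
 =\beta_1(\gamma^+)\,h'(\gamma^+),
\]
where only the \emph{continuity} of $\beta_1$ at $\gamma^+$ (guaranteed for identical servers) and the differentiability of $h$ are invoked. The identical argument applied to $\beta_1(\lambda-\gamma_2)$, which is likewise continuous at $\gamma_2=\gamma^+$, yields $g_2'(\gamma^+)=\beta_1(\lambda-\gamma^+)\,h'(\gamma^+)$.

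To finish, I would evaluate the surviving $\beta_1$ terms. Since $\gamma^+=\lambda-\gamma^+=\lambda/2$, \eqref{eq:betagamma} gives $\beta_1(\gamma^+)=F^{-1}((\lambda-\gamma^+)/\lambda)=F^{-1}(1/2)$, and this is exactly the value that \eqref{eq:betagamma2} assigns to $\beta_1(\lambda-\gamma_2)$ at $\gamma_2=\gamma^+$. Hence both derivatives coincide,
\[
 g_1'(\gamma^+)=g_2'(\gamma^+)=-2\,F^{-1}(1/2)\,D'(\lambda/2),
\]
and multiplying by $-\gamma^+$ gives $\alpha_1=\alpha_2=2\gamma^+F^{-1}(1/2)D'(\lambda/2)$, as claimed.

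The hard part is not any calculation but correctly handling the non-differentiability of $\beta_1$ at $\gamma^+$ flagged in the Remark: the whole argument hinges on noticing that this irregularity is harmless precisely because the companion factor $h$ vanishes at $\gamma^+$, so continuity of $\beta_1$ alone makes the two-sided derivatives of $g_1$ and $g_2$ exist and agree. A secondary point I would check carefully is the matching of the two branches of \eqref{eq:betagamma} and of \eqref{eq:betagamma2} at the common abscissa $\lambda/2$, since it is there that continuity of $\beta_1$ across $\gamma^+$ is actually used.
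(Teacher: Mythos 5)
Your proof is correct and follows essentially the same route as the paper: differentiate $g_j$ at $\gamma^+$, observe that the delay-difference factor vanishes there so only the $\beta_1(\gamma^+)\,h'(\gamma^+)$ term survives, and conclude that both $\alpha_j$ reduce to the same quantity $\lambda F^{-1}(1/2)D'(\lambda/2)$. The one place you improve on the paper is the difference-quotient step: the paper formally applies the product rule and evaluates $\beta_1'(\gamma^+)$ (which, by its own Remark, need not exist at $\gamma^+$) multiplied by zero, whereas your argument using $h(\gamma^+)=0$ together with mere continuity of $\beta_1$ at $\gamma^+$ makes that cancellation rigorous.
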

 \begin{proof}
From the definition 
 of $g_1(\cdot)$ in Eq. \eqref{eq:ggamma} we have 
 \begin{eqnarray*}
  \frac{dg_1(\gamma_1)}{\gamma_1} &=& \beta'_1(\gamma_1) \left(D_2(\lambda - \gamma_1) - D_1(\gamma_1) \right) \\
  &+&  \beta_1(\gamma_1) \left( D'_2(\lambda - \gamma_1) - D'_1(\gamma_1) \right)
 \end{eqnarray*}
where the partial derivatives on the r.h.s. are w.r.t. $\gamma_1.$ Now from the definition of 
$\gamma^+$ and from Eq. \eqref{eq:betagamma} we have 
\begin{equation*}
 \frac{dg_1(\gamma_1)}{\gamma_1}\bigg\vert_{\gamma_1 = \gamma^+} =  F^{-1}\left(\frac{\lambda - \gamma^+}{\lambda}\right)
 \left( D'_2(\lambda - \gamma^+) - D'_1(\gamma^+) \right).
\end{equation*}
%where 
Similarly, from the definition of $g_2(\cdot)$ we have 
 \begin{eqnarray*}
  \frac{dg_2(\gamma_2)}{\gamma_2} &=& \beta'_1(\lambda - \gamma_2) \left(D_1(\lambda - \gamma_1) - D_2(\gamma_2) \right) \\
  &+&  \beta_1(\lambda - \gamma_2) \left( D'_1(\lambda - \gamma_2) - D'_2(\gamma_2) \right)
 \end{eqnarray*}
where the partial derivatives on the r.h.s are now w.r.t $\gamma_2.$
Note that since $\gamma_1 = \lambda - \gamma_2,$ we have 
$\frac{\partial D_1(\gamma_1)}{\partial \gamma_1} = -\frac{\partial D_1(\gamma_1)}{\partial \gamma_2}.$
Further note that since the servers are identical, i.e.,
$D_j(\cdot) = D(\cdot)$ for $j = 1,2$ from the definition of 
$\gamma^+$ we have $\gamma^+ = \frac{\lambda}{2}.$ 
From Eq. \eqref{eq:betagamma2} and the fact that $\lambda - \gamma^+ = \gamma^+$,
we have  
\begin{equation*}
 \frac{dg_2(\gamma_2)}{\gamma_2}\bigg\vert_{\gamma_2 = \gamma^+} =  F^{-1}\left(\frac{\lambda - \gamma^+}{\lambda}\right)
 \left( D'_2(\lambda - \gamma^+) - D'_1(\gamma^+) \right).
\end{equation*}
This proves that $\alpha_1 = \alpha_2.$
\end{proof}
We now have the following theorem, that
characterizes the necessary condition for a symmetric Nash 
equilibrium.

\begin{theorem}
\label{thm:symmetric_nash}
Let $c_1^* = c_2^*$ be a symmetric Nash equilibrium for the duopoly 
price competition. Then $c_1^* = c_2^* = \alpha_1$.
\end{theorem}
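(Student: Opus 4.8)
The plan is to combine the reformulated best-response map with the interior first-order condition. The key observation is that a symmetric equilibrium forces each server's optimal arrival rate to be exactly $\gamma^+$, at which point $g_j$ vanishes and the stationarity condition collapses to the definition of $\alpha_j$ in Eq.~\eqref{eq:alpha}. So the whole argument hinges on two facts established earlier: the zero of $g_1$ pins down the equilibrium arrival rate, and the interiority of the maximizer licenses the derivative condition.

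First I would unpack the symmetric Nash condition. Writing $c_1^* = c_2^* = c^*$ in Eq.~\eqref{eq:nash} and recalling that $\hat{B}_1(c_2) = g_1(\gamma_1^*(c_2)) + c_2$, the equilibrium requirement $\hat{B}_1(c^*) = c^*$ collapses to $g_1(\gamma_1^*(c^*)) = 0$. By Lemma~\ref{lemma:gfx}, $g_1(\cdot)$ is continuous, strictly decreasing, and vanishes only at $\gamma^+$; hence $\gamma_1^*(c^*) = \gamma^+$. Since the two servers are identical we also have $\gamma^+ = \lambda/2$.

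Next I would appeal to Lemma~\ref{lem:gamma_1^*}, which guarantees that $\gamma_1^*(c^*)$ lies strictly in the interior of its feasible domain, so that the stationarity condition $\frac{dR_1}{d\gamma_1}\big\vert_{\gamma_1 = \gamma^+} = 0$ is valid. Differentiating the revenue function in Eq.~\eqref{eq:R1_new} gives $\frac{dR_1}{d\gamma_1} = \frac{dg_1}{d\gamma_1}\gamma_1 + g_1(\gamma_1) + c^*$; evaluating at $\gamma_1 = \gamma^+$ and using $g_1(\gamma^+) = 0$ yields $c^* = -\gamma^+ \frac{dg_1}{d\gamma_1}\big\vert_{\gamma_1 = \gamma^+} = \alpha_1$. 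Repeating the identical argument for Server~2 (using Eq.~\eqref{eq:R2_new} and $g_2(\gamma^+) = 0$) gives $c^* = \alpha_2$, and Lemma~\ref{lem:alpha} ($\alpha_1 = \alpha_2$) makes these two requirements mutually consistent, so $c_1^* = c_2^* = \alpha_1$.

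The one point that needs care is the interiority claim at $\gamma_1^* = \gamma^+$: for the first-order condition to apply I need $\gamma^+$ to lie strictly below the upper endpoint $\gamma^1(c^*)$. By Remark~\ref{rem:g1+} this holds precisely when $c^* > 0$, whereas the degenerate case $c^* = 0$ (which forces $\gamma^1(c^*) = \gamma^+$) is ruled out by Lemma~\ref{lem:gamma_1^*}, since that lemma forbids $\gamma_1^*(c^*) = \gamma^1(c^*)$. I therefore expect the only genuine obstacle to be this boundary bookkeeping; once it is dispatched, the derivation of $c^* = \alpha_1$ follows immediately from the vanishing of $g_1$ at $\gamma^+$ combined with the stationarity condition.
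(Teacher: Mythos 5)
Your proposal is correct and follows essentially the same route as the paper's own proof: extract $g_j(\gamma_j^*(c^*))=0$ from the symmetric fixed-point condition on $\hat{B}_j$, identify $\gamma_j^*=\gamma^+$ via Lemma~\ref{lemma:gfx}, invoke interiority to apply the first-order condition on $R_j$, and conclude $c^*=\alpha_1$ using Lemma~\ref{lem:alpha}. Your closing remark on ruling out the boundary case $c^*=0$ via Remark~\ref{rem:g1+} and Lemma~\ref{lem:gamma_1^*} is a small extra check the paper leaves implicit, but it does not change the argument.
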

\begin{proof}
Recall that the Nash equilibrium is characterized as 
 \begin{eqnarray*}
  (c_1^*, c_2^*) = \left\lbrace (c_1, c_2) : \hat{B}_1(c_2) = c_1, \hat{B}_2(c_1) = c_2\right\rbrace.
 \end{eqnarray*} 
where $\hat{B}_j(c_{j^-}) = g_j(\gamma_j^*(c_{j^-})) + c_{j^-}$ for $j=1,2.$
This implies that $c_j^* = g_j(\gamma_j^*(c^*_{j^-})) + c^*_{j^-}$
and since $c_1^* = c_2^*,$ we have $g_j(\gamma_j^*(c^*_{j^-})) = 0$
for $j = 1,2.$ Now from Lemma \ref{lemma:gfx} and symmetry of the servers,
this implies that $\gamma_1^*(c^*_{2}) = \gamma^+$
and $\gamma_2^*(c^*_{1}) = \lambda - \gamma^+.$ Since 
the servers are identical, we have $\gamma^+ = \frac{\lambda}{2}$
and hence $\gamma_2^*(c^*_{1}) = \gamma^+.$
Since $\gamma_j^*(c_{j^-})$ is also a solution to program \ref{prog:gamma^*},
 $\gamma_j^*(c_{j^-}) \in S(c_{j^-}).$ From the definition of 
$S(c_{j^-}),$ this implies that 
$\frac{dR_j}{d\gamma_j}\bigg\vert_{\gamma_j = \gamma^+} = 0.$
Further, this implies from the definition of $R_j(\gamma_j)$ that
 \begin{eqnarray}
 \label{eq:brp}
  \frac{dR_j}{d\gamma_j}\bigg\vert_{\gamma_j = \gamma^+} &=& 
 \gamma^+ \frac{dg_j(\gamma_j)}{\gamma_j}\bigg\vert_{\gamma_j =
 \gamma^+} +g_j(\gamma^+)+ c^*_{j^-} \\
  & = & 0. \nonumber
 \end{eqnarray}
We have $g_j(\gamma^+) = 0$ and hence from the definition of $\alpha_j$ 
for $j  =1,2 $ we have $c_{j^-}^* = \alpha_j.$ From Lemma
\ref{lem:alpha}, we have $\alpha_1 = \alpha_2$ and hence  
$c_1^* = c_2^* = \alpha_1.$ This completes the proof. 
\end{proof}

Note that the above theorem only provides a necessary condition
for the Nash equilibrium pair and we shall soon see that in fact 
this condition is not sufficient. We shall now provide a few 
examples illustrating the occurrence of symmetric Nash equilibria.

\begin{figure}
  \begin{center}
    \includegraphics[height=2in]{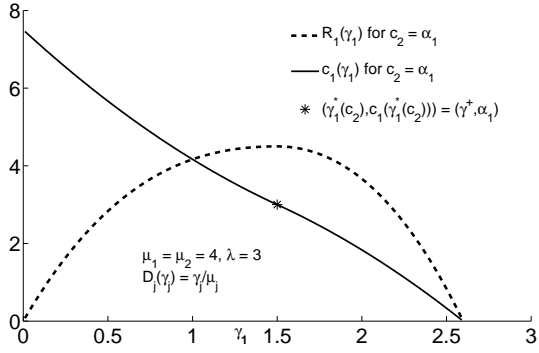}
  \end{center}
  \caption{$R_1$ and $c_1(\gamma_1)$ when $D_j(\gamma_j) =
  \frac{\gamma_j}{\mu_j}$ and $F(\cdot)$ is Uniform over $[2,6]$ }
  \label{fig:ex3}
\end{figure}

\textbf{Example 3:} In this example, we assume that 
$D_j(\gamma_j) = \frac{\gamma_j}{\mu_j}$ for $j = 1,2.$ 
Let $\mu_1 = \mu_2 = 4$ while the arrival rate is $\lambda = 3.$ 
We suppose that the distribution $F(\cdot)$ has a uniform
distribution with support of $[a,b].$ We plot $R_1(\gamma_1)$ and
$c_1(\gamma_1)$ as a function of $\gamma_1$ in Fig.\ref{fig:ex3}.
The aim of this example is to check whether 
$(c_1^*, c_2^*) = (\alpha_1, \alpha_1)$ is a symmetric
Nash equilibrium. For the set of parameters of this example 
we have $\gamma^+ = 1.5$ and since  
$$
\alpha_1 = -\gamma^+ \frac{dg_1(\gamma_1)}{\gamma_1}\bigg\vert_{\gamma_1 = \gamma^+}
$$
we have $\alpha_1 = 3.$ We now set $c_2 = \alpha_1 = 3.$ 
Clearly, for a symmetric Nash equilibrium 
$(c_1^*, c_2^*) = (\alpha_1, \alpha_1)$, $\gamma_1^*(c_2) = \gamma^+ = 1.5$ 
must hold. It is easy to see from Fig. \ref{fig:ex3} that 
$R_1(\gamma_1)$ is indeed maximized when $\gamma_1 = \gamma^+$
implying that $\gamma_1^*(c_2) = \gamma^+.$ Further it can be 
verified that $(c_1(\gamma_1^*(c_2))) \alpha_1.$ Clearly, 
$(c_1^*, c_2^*) = (\alpha_1, \alpha_1)$ is a symmetric
Nash equilibrium for this example.

\textbf{Example 4:} With the help of this example, we will
illustrate that the necessary conditions stated in the previous 
theorem need not be sufficient. We shall once again assume that 
$D_j(\gamma_j) = \frac{\gamma_j}{\mu_j}$ where  
$\mu_1 = \mu_2 = 4.$ As for the choice of $F(\cdot),$ we 
consider an exponential distribution with $\tau = 4.$ 
A plot of $R_1(\gamma_1)$ and $c_1(\gamma_1)$ as a function 
of $\gamma_1$ is provided in Fig. \ref{fig:ex4}. For this 
example we start by setting $c_2 = \alpha_1.$ However we 
observe that the best response $\gamma_1^*(c_2) \neq \gamma^+$
and hence $c_1(\gamma_1^*(c_2)) \neq c_2.$ Both these points 
$\gamma_1^*(c_2),c_1(\gamma_1^*(c_2)$ and $(\gamma^+, \alpha_1)$
are represented in Fig. \ref{fig:ex4}. Clearly, 
$(\alpha_1, \alpha_1)\neq (c_1^*, c_2^*)$ and therefore 
the sufficiency conditions differ from the necessary ones.

\begin{figure}
  \begin{center}
    \includegraphics[height=2in]{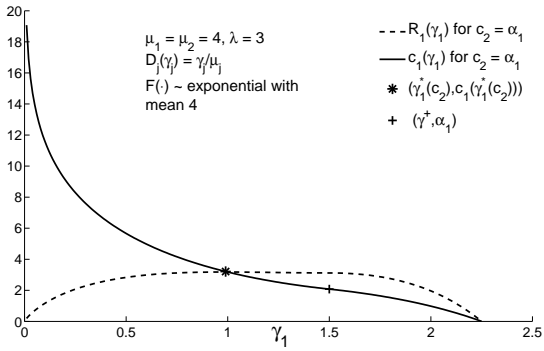}
  \end{center}
  \caption{$R_1$ and $c_1(\gamma_1)$ when $D_j(\gamma_j)
  = \frac{1}{\mu_j - \gamma_j}$ and $F(\cdot)$ is exponential with $\tau=4$}
  \label{fig:ex4}
\end{figure}

% Suppose that $c_1^* = c_2^*$ is a given symmetric Nash equilibrium.
% Then from the definition of 
% Now setting $c_j^* = \alpha_j$ in Eq. \eqref{eq:brp}
% and solving for $\gamma_j^*$ we obtain 
% $\gamma^*_j = \gamma^+$ for $j = 1,2.$
% From the fact that $\lambda - \gamma^+ = \gamma^+$
% we have $(\gamma_1^*, \gamma^*_2) = (\gamma^+, \gamma^+)$   
% and hence $(c_1^*, c_2^*) = (\alpha_1, \alpha_2).$

% 
% 
% 
% In the next section, we shall identify characterize the 
% pair $(\alpha_1, \alpha_2)$ for different examples of 
% delay cost distribution $F$ and for different forms of the 
% delay functions $D(\cdot).$  
% 
% \subsection{Numerical Results for the Duopoly}

\section{Estimating the distribution $F$}
\label{sec:estimate_F}
Recall that $F$ denotes the distribution function for the delay
sensitivity of the arriving customers. The knowledge of $F$ is
necessary to determine the equilibrium kernel $K^W$ introduced
in Theorem \ref{thm:wardrop-cts}. Further, the kernel $K^W$ must
be known for the revenue maximization problems seen in 
this paper. However in most practical situations, the 
distribution function $F$ may not be known and due to the 
unobservable nature of the queues it may not be possible to even
elicit such information from the arriving systems.   
In such situations the only alternative may be to estimate this 
distribution function. One possible method to do so is to vary 
the admission prices at the servers and then measure the change
in the arrival rate of customers to the different server and then
use the Wardrop equilibrium conditions to estimate $F$.
In this section, we shall describe a simple procedure to 
estimate the underlying continuous distribution function $F.$ 
Our proposed method is well suited for a monopoly system
when the single service provider has access to both the 
admission prices. In this section, we also consider the case
when $\bm{\beta}$ is a discrete random variable. In this case,
the customers are divided into finite number of classes 
differing in their values of $\bm{\beta}.$ The aim is to identify
the value of $\bm{\beta}$ for the different classes along with the
Poisson arrival rates $\lambda_i$ for the classes. 
Refer \cite{Bodas14,Armony03,Mandjes07} for some examples of 
service systems where such discrete customer classes are 
considered.

Throughout this section, we shall make the following assumptions.
We shall assume that the two servers are modeled as $M/M/1$ queues
with service rates $\mu_1$ and $\mu_2$ and admission prices $c_1$ 
and $c_2$ respectively. With this assumption, we have 
$D_j(\gamma_j) = \frac{1}{\mu_j - \gamma_j}.$ It goes without saying
that our analysis will also hold for any delay cost $D_j(\cdot)$ 
that is monotonic and strictly increasing 
in its arguments. We assume that once the admission prices $c_1$
and $ c_2$ at the servers are announced and that the Wardrop equilibrium is
achieved, each server $j$ will accurately determine
or measure the equilibrium arrival rate $\gamma_j$ and the mean delay cost
$D_j(\gamma_j)$ for $j = 1,2.$ Hence the measured values $\gamma_j$ 
and $D_j(\gamma_j)$ and the the corresponding quantities at the Wardrop 
equilibrium will be assumed to be the same. We also assume that the total arrival
rate of customers to the system denoted by $\lambda$ is known a priori 
and that $c_1 > c_2,$ i.e., the admission price at
the first server is higher than the second. Note that since the distribution
$F(\cdot)$ is unknown, the functions $ \beta_1(\cdot), g_1(\cdot), c_1(\cdot)$
also cannot be determined and used for our procedure.

%From Lemma \ref{lemma:threshold}, we know
% that the threshold $\beta_1$ or equivalently $\beta_1(\gamma_1)$ characterizing
% the Wardrop equilibrium kernel is unique and it satisfies the following.
% \begin{equation}
% \label{eq:thresholdcts}
% c_1 + \beta_1(\gamma_1) D_1(\gamma_1) = c_2 + \beta_1(\gamma_1) D_2(\lambda - \gamma_1).
% \end{equation}

We begin by estimating the distributions $F$ that belongs to a
parameterized family, say for example the exponential distribution.
Let the parameter for the exponential distribution be denoted by $\alpha.$
When $c_1$ and $c_2$ at the two servers are fixed, the equilibrium 
$\gamma_1$ and $\gamma_2$ at the servers is measured immediately. 
We choose a $c_1, c_2$ such that $\gamma_j > 0$ for $j=1,2.$
From this, the mean delay cost $D_j(\gamma_j)$ for $j = 1,2$ is also 
calculated. Since all the quantities (except $\beta_1$) in
Eq. \eqref{eq:beta_1} of Theorem \eqref{thm:wardrop-cts} are known, 
the threshold $\beta_1$ can be determined as 
$\beta_1 = \frac{c_1 - c_2}{D_2(\gamma_2) - D_1(\gamma_1)}.$
Now increase $c_1$ to $c_1^1$ where $c_1^1 = c_1 + \delta$ for $\delta > 0.$
This decreases the equilibrium $\gamma_1$ to say $\gamma_1^\delta.$
Let $\beta_1^\delta$ denote the threshold when the arrival rate to Server~1
is $\gamma_1^\delta.$ Again, using the measurements of the arrival rates 
and the delay functions $\beta_1^\delta$ can be determined from 
Eq. \eqref{eq:beta_1}. Since $\gamma_1^\delta <  \gamma_1 < \gamma^+$,
from Lemma \ref{lemma:betagamma}, we know that $\beta_1(\gamma_1^\delta) 
> \beta_1(\gamma_1).$ This implies that $\beta_1^\delta > \beta_1.$
% Since the
% admission prices and the delay function values at both the queues can
% be measured, $\beta_1$ and $\beta_1^\delta$ can both be calculated
% from Eq. \eqref{eq:threshold-cts}.
Clearly, the ratio $\frac{\gamma_1
  - \gamma_1^\delta}{\lambda}$ is the probability of an arriving
customer with $\beta \in [\beta_1, \beta_1^\delta]$ and hence

\begin{equation}
\label{eq:exp_para}
  \int_{\beta_1}^{\beta_1^\delta}\alpha e^{-x\alpha}dx =
  \frac{\gamma_1 - \gamma_1^\delta}{\lambda}.
\end{equation}
The only unknown quantity is the exponential parameter $\alpha$ which 
can now be obtained from the above equation. 

\begin{remark}
Since the exponential distribution has a single parameter, the 
parameter could be obtained using only Eq. \eqref{eq:exp_para}. 
For a parameterized distribution with $k$ parameters, we need 
$k$ simultaneous equations in terms of the underlying parameters.
These can be obtained by following the procedure above for $k$ different 
admission price $\left\lbrace c_1^k \right\rbrace$ at Server~1. 
\end{remark}

We will now describe a numerical method to obtain a piecewise constant 
approximation for the density function $f$ that is not necessarily from 
a parameterized family of distribution functions.  As an example, consider a random 
variable $\bm{\beta}$ supported on the range $[0,4].$ Suppose the 
distribution function is
$$
P(\bm{\beta} \leq x) =  F(x) = \frac{x^2}{16}.  
$$
The corresponding density function is denoted by $f(x)$ is $x/8$
for $x \in [0,4].$ For this example assume that there are two $M/M/1$
servers with service rates $\mu_1 = 5$ and $\mu_2 = 5,$ admission
prices initially set to $c_1 = c_2 = 5$ and the total arrival rate
$\lambda = 5$. As earlier, we assume that once the admission prices
at the servers are announced, the Wardrop equilibrium is reached
instantaneously and each servers can accurately determine the
aggregate arrival rates and the mean delay per customer.

Increase $c_1$ by $\delta > 0$ and for the admission price vector
$(c_1 + \delta,c_2),$ measure the equilibrium arrival rates and the mean delay in
the queues and calculate the corresponding threshold $\beta_1$ using
Eq. \eqref{eq:beta_1}. Repeat this for a finite number of
times, each time increasing $c_1$ from its previous value by $\delta.$
This experiment is denoted in Table \ref{table:measure}.

\begin{table}[h]
%\resizebox{1.4\textwidth}{!}{\begin{minipage}{\textwidth}
\centering
%\Large
\begin{tabular}{|l|l|l|l|}
\hline
$c_1$ & $c_2$ & $\gamma_1$ & $\beta_1$ \\ \hline
5.0   & 5     & 1.98       & 2.84      \\ \hline
5.2   & 5     & 1.69       & 3.04      \\ \hline
5.4   & 5     & 1.44       & 3.20       \\ \hline
5.6   & 5     & 1.23       & 3.33      \\ \hline
5.8   & 5     & 1.05       & 3.44      \\ \hline
6.0   & 5     & 0.89       & 3.53      \\ \hline
6.2   & 5     & 0.75       & 3.60       \\ \hline
6.4   & 5     & 0.63       & 3.67      \\ \hline
6.6   & 5     & 0.52       & 3.37      \\ \hline
6.8   & 5     & 0.43       & 3.78      \\ \hline
\end{tabular}
\caption{The table indicates the price vector $(c_1, c_2)$, the measured value of 
  $\gamma_1$ and the threshold $\beta$ obtained from Eq. \eqref{eq:beta_1}.}
\label{table:measure}
%\end{minipage}}
\end{table}

Using the earlier notation, we observe from the table that 
as $c_1$ increases to, say $c_1 + \delta$, 
$\gamma_1$ decreases to $\gamma_1^{\delta}$ while 
the threshold $\beta_1$ increases (to $\beta_1^{\delta}$).
As earlier, we have 
\begin{equation*}
 \int_{\beta_1}^{\beta_1^\delta}f(x)dx = \frac{\gamma_1 - \gamma_1^\delta}{\lambda}
 \end{equation*}
where the density function $f(x)$ is to be estimated.
Assume for all $x \in ({\beta_1},{\beta_1^\delta})$ that 
$f(x)= z,$ where $z$ is a constant. By assuming this, we are approximating 
the density function $f(x)$ for $x \in ({\beta_1},{\beta_1^\delta})$
by a horizontal line of magnitude $z$ and thus approximating 
$f(x)$ by a piecewise constant function. As $\delta \rightarrow 0,$ the 
approximation should converge to the true density function.
We now have 
\begin{equation}
z = \frac{\gamma_1 - \gamma_1^\delta}{\lambda({\beta_1^\delta} - \beta_1).}
\label{eq:z} 
\end{equation}
The value of $z$ for a fixed $c_1$ and $c_1 + \delta$ can be viewed as an estimate
for the density function $f(x)$ and obviously $z \rightarrow f(x)$ as
$\delta \rightarrow 0.$ These values of $z$ for different values of
$c_1$ are given in Table \ref{table:zvalue}.
%\begin{center}
\begin{table}[h]
%\Large
\centering
%\resizebox{1.4\textwidth}{!}{\begin{minipage}{\textwidth}
\begin{tabular}{|l|l|l|}
\hline
$c_1$ & $c_1 + \delta$ & $z$  \\ \hline
5     & 5.2            & 0.37 \\ \hline
5.2   & 5.4            & 0.39 \\ \hline
5.4   & 5.6            & 0.41 \\ \hline
5.6   & 5.8            & 0.42 \\ \hline
5.8   & 6.0            & 0.44 \\ \hline
6.0   & 6.2            & 0.44 \\ \hline
6.2   & 6.4            & 0.45 \\ \hline
6.4   & 6.6            & 0.46 \\ \hline
6.6   & 6.8            & 0.47 \\ \hline
\end{tabular}
\caption{The estimates $z$ can be obtained from Eq.~\eqref{eq:z} from
the successive changes in the admission prices and the corresponding 
measurements of the arrival rates.}
\label{table:zvalue}
%\end{minipage}}
\end{table}
%\end{center}

\begin{figure}
  \begin{center}
    \includegraphics[height=2in]{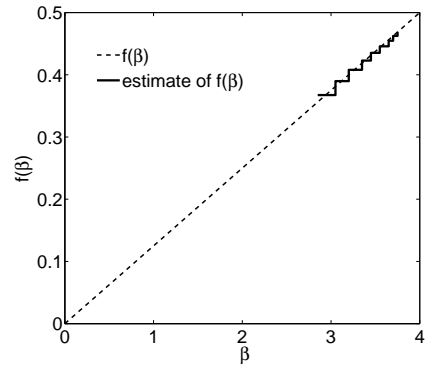}
  \end{center}
  \caption{Comparing the estimate of $f(\cdot)$ with the true density function.}
  \label{fig:feb9_mainfig}
\end{figure}
A plot comparing the true density function and the estimate is given
in Fig. \ref{fig:feb9_mainfig}. The plot shows that the estimate of the density
function is reasonably accurate and for better estimation, one
naturally required more of such measurement points.

There is however a limitation to this method. Note that 
when $c_1 = c_2,$ the corresponding value of $\beta_1 = 2.84.$ Any
increase or decrease in either $c_1$ or $c_2$ cannot result in a
$\beta_1$ such that $\beta_1 < 2.84.$ This is because, for the 
underlying distribution we have from Eq. \eqref{eq:betagamma} that 
$\beta_1 (\gamma^+) = 2.84$ and for any $\gamma \in [0, \lambda]$
with $\gamma \neq \gamma^+,$ we have $\beta_1 (\gamma) > \beta_1(\gamma^+).$
As a result, the density function $f(x)$ cannot be estimated for 
$x \leq 2.84.$ 

\subsection{Estimating Discrete Distribution}

\begin{figure}
    \begin{center}
      \includegraphics[width=.99\columnwidth]{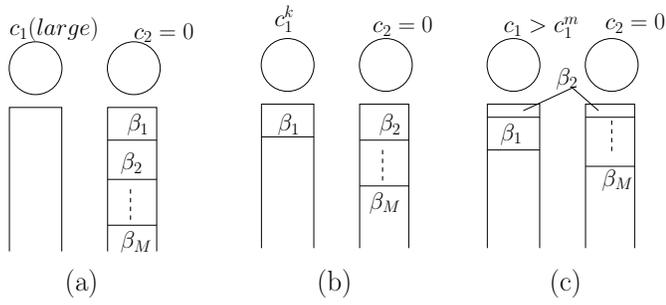}
    \end{center}
    \caption{Estimating the discrete distribution $F$}
    \label{fig:estimate_c}
\end{figure}

We shall now consider the case where
the distribution $F$ is a discrete distribution with $M$ point masses.
Thus, there are $M$ customer classes and we will assume that 
for each Class $i,$ the associated waiting cost $\beta_i$ and the
arrival rate $\lambda_i$ are unknown. Further, $\beta_1 > \beta_2 \ldots > \beta_M.$
See \cite{Bodas14} for the analysis of Wardrop equilibrium of such a model.
We continue with the assumption that there are two servers each charging an
admission price $c_1$ and $c_2.$ We begin by setting $c_2 = 0$ and 
$c_1$ to an arbitrarily large value such that $\gamma_1 = 0$ while
$\gamma_2 = \lambda.$ This is represented in part (a) of Fig. 
\ref{fig:estimate_c}. It goes without saying that the necessary assumption 
is that $\mu_2 > \lambda.$  Now start decreasing $c_1$ in 
steps of size $\delta$ and stop at the first instance when 
$\gamma_1$ increases to an arbitrarily small value $\epsilon.$
We use the notation $c^j_1$ and $\gamma^j_1$ to denote the admission 
price and the arrival rate at Server~$1$ when $c_1$ is decreased $j$ 
times by $\delta$, i.e., when $c^j_1 = c_1 - j\delta.$ 
$\gamma_1 = \epsilon$ implies that the most sensitive delay class $\beta_1$
must now be using Server~1 along with Server~2. 
Since the delay function at each queue can be measured,
$\beta_1$ can be easily determined from the corresponding 
Wardrop condition  
$$
c^j_1 + \beta_1 D_1(\gamma^j_1) = \beta_1 D_2(\gamma^j_2).
$$
 We will now determine $\lambda_1$
corresponding to this $\beta_1.$ Continue decreasing $c_1.$   
The proportion of Class~$1$ customers using Server~$1$ keeps 
increasing till all Class~$1$ customers use only
Server~$1.$ When this happens, the corresponding Wardrop 
equilibrium condition for some $k > j$ satisfies 
$$
c^k_1  <  \beta_1 \left(D_2(\gamma^k_2)-D_1(\gamma^k_1)\right)
$$
and this is represented by part (b) in Fig. \ref{fig:estimate_c}.
For a Class 2 customer to start using Server~1, the Wardrop equilibrium
condition is
$$
c^m_1  = \beta_2 \left(D_2(\gamma^m_2)-D_1(\gamma^m_1)\right)
$$
where $m> k.$ Further since $m> k,$ we have
 $$
\beta_2 \left(D_2(\gamma^m_2)-D_1(\gamma^m_1)\right) < \beta_1
\left(D_2(\gamma^k_2)-D_1(\gamma^k_1)\right).
$$
and hence for all $l$ such that $k < l < m,$ we have
 $$
\beta_2 \left(D_2(\gamma^m_2)-D_1(\gamma^m_1)\right) < c^l_1< 
\beta_1 \left(D_2(\gamma^k_2)-D_1(\gamma^k_1)\right).
$$
This means that for any $c^l_1$ satisfying $c^k_1 < c^l_1< c^m_1,$ 
$\gamma^l_1$ and $\gamma^l_2$ remain unchanged. 
Clearly in this case $\lambda_1 = \gamma^l_1.$ Fig. \ref{fig:estimate_c},
part (c) represents the fact that
for any $c_1 > c_1^m,$ Class~2 customers use both the servers at Wardrop 
equilibrium. Continue this process
till all the $\lambda_i,$ $\beta_i$ as well as
the number of customer classes $M$ is determined.
It should be noted that the accuracy of our method increases
as $\delta \rightarrow 0.$ A downside of a small $\delta$
is that the procedure may take a very long time 
to discover the system parameters.

\section{Summary and Future work}

In this paper, we have considered the problem of revenue maximization 
in parallel server systems. We specialize with the case of two servers 
and first assume the case when both the servers belong to the same 
service provider. The admission price at one of the server is required 
to be fixed and the service system can change the admission price at 
the other server to maximize its revenue. The Wardrop equilibrium when
customers are heterogeneous and strategic has already been characterized
in our earlier paper. We use this characterization to simplify the revenue 
maximization program to make it more amenable to analysis. The equivalent
program is easy to interpret, analyze and provides more insight into 
the problem. While it is intuitive that for a fixed $c_2,$ the revenue  
maximizing $c_1$ should always be greater than $c_2,$ the program enables 
to characterize the revenue maximizing $c_1^*$ as a function of $c_2.$

In the second part of the paper, we consider the duopoly model where 
each server competes with the other one to maximize its revenue. This is 
a standard game-theoretic problem and the aim is to identify the Nash 
equilibrium set of prices. We see however that since the customers are 
heterogeneous, the first order necessary conditions are not easy to 
solve. Instead, we characterize this Nash equilibrium for a simplified 
case when the two servers are identical in their delay characteristics. 
In this case we are interested in the symmetric Nash equilibrium prices. 
We provide the necessary condition for this case and identify the Nash
equilibrium prices for different distributions $F$ and delay cost 
functions $D(\cdot).$ 

In both these problems problems and also in the social welfare maximization
problem of our previous paper, an important assumption is that
the distribution function $F$ is known. We relax this assumption in Section 
\ref{sec:estimate_F} and provide a procedure to estimate this 
distribution. The proposed method is of course preliminary and assumes
that one is allowed to change admission price any number of time to 
measure the change in the equilibrium arrival rate. Further, we have 
assumed that there is no cost to making such measurements. A more realistic
method incorporating these practical limitations may make the problem
more relevant and this is part of future work.

\section*{Appendix}
%\appendix

\textbf{Lemma \ref{lemma:compare_gamma_c}} \\
\begin{proof}
We first prove that $\gamma_1 \in [0,\gamma^+]$ implies
$c_1 \geq c_2.$ Recall the definition
of $\gamma^+$ that $$\gamma^+ = \left\lbrace \gamma_1 : D_1(\gamma_1) =
D_2(\gamma_2)\right\rbrace.$$
Since $D_j(\gamma_j)$ is monotonic and increasing in $\gamma_j$ for 
$j = 1,2 $ and that $\gamma_2 = \lambda - \gamma_1$
we have $D_1(\gamma_1) \leq D_2(\gamma_2)$ for $\gamma_1 \in 
[0,\gamma^+].$ Now let $\gamma_1 = 0.$ Since no customer uses Server~1
at equilibrium, this implies that $c_1 + \beta D_1(0) > c_2 + 
\beta D_2(\lambda)$ for all $\beta.$ Since $D_1(0) < D_2(\lambda)$ (assumption)
$c_1 > c_2$ must be true.

When $\gamma_1 = \gamma^+,$ we will show that $c_1 = c_2.$ 
Suppose this is not true, i.e.,  $\gamma_1 = \gamma^+$
while $c_1 \neq c_2$. $\gamma_1 = \gamma^+$ implies
$D_1(\gamma_1) = D_2(\gamma_2).$ As $c_1 \neq c_2,$ customers have an
incentive to move from the server with a higher admission price
to the one with a lower price. This implies that $\gamma_1 = 
\gamma^+$ is not an equilibrium and this is a contradiction.

Now consider $\gamma_1 \in (0,\gamma^+)$ where $D_1(\gamma_1) < 
D_2(\gamma_2).$ From Theorem~\ref{thm:wardrop-cts}, 
$\gamma_1 \in (0,\gamma^+)$ implies $\beta_1 \in (a,b)$
and hence $c_1 + \beta_1 D_1(\gamma_1) = c_2 + \beta_1 D_2(\gamma_2).$
Since $D_1(\gamma_1) < D_2(\gamma_2)$ we have $c_1 \geq c_2.$

We now prove that if $c_1 \geq c_2,$ then $\gamma_1 \in [0,\gamma^+].$
We first show that when $c_1 = c_2,$ we have $\gamma_1 = \gamma^+.$
Suppose that when $c_1 = c_2,$ $\gamma_1 \neq \gamma^+.$ From the 
definition of $\gamma^+$ we have $D_1(\gamma_1) \neq D_2(\gamma_2)$
and hence customers have an incentive to move from the server with higher
expected delay to the one with lower expected delay. This implies that
when $c_1 = c_2,$ $\gamma_1 \neq \gamma^+$ is not an equilibrium.

Now let $c_1 > c_2.$ From Theorem \ref{thm:wardrop-cts} we have either
$\beta_1 = a$ or $\beta_1 = b$ or $\beta_1 \in(a,b).$ 
The case $\beta_1 = a$ corresponds to the case when all customers choose 
Server~2 at equilibrium and this cannot happen! This is because
while $c_1 > c_2,$ we have also assumed $D_1(\lambda) > D_2(0).$ 
$K^W$ with $\beta_1 = a$ will be possible only if
$$c_1 - c_2 \leq \beta (D_2(0) - D_1(\lambda))$$ for all $\beta \in [a,b].$
Now this is not possible as the left hand side is positive while 
the right hand side is negative.
It is straightforward to see that when $\beta_1 = b,$ we have 
$\gamma_1 = 0 $ and hence $\gamma_1 \in [0, \gamma^+].$
When $\beta_1 \in(a,b)$ we have $c_1 + \beta_1 D_1(\gamma_1) =
c_2 + \beta_1 D_2(\gamma_2).$ Again, since $c_1 > c_2,$ we have
$D_1(\gamma_1) \leq D_2(\gamma_2)$ and this requires $\gamma_1 \in (0,\gamma^+).$
The proof for $ \gamma_1 \in (\gamma^+,\lambda]$ follows along similar
lines and will not be provided. This completes the proof.
\end{proof}

\textbf{Lemma \ref{lemma:betagamma}} \\
\begin{proof}
From Lemma \ref{lemma:compare_gamma_c}, $\gamma_1 \in [0,\gamma^+)$
implies that $c_1 > c_2$ while  $\gamma_1 \in (\gamma^+,\lambda]$
implies $c_1 < c_2.$ Now from Theorem \ref{thm:wardrop-cts}, when $c_1 > c_2,$
we have  
%we have $\beta_1 \in [a,b]$ such that
$$
\gamma_1 = \lambda \int_{\beta_1}^{b} 1 dF(\beta) = \lambda(1 - F(\beta_1)).
$$
Similarly, when $c_1 < c_2$ we have 
$$
\gamma_1 = \lambda \int_{0}^{\beta_1} 1 dF(\beta) = \lambda(F(\beta_1)).
$$
Now $\beta_1(\gamma_1)$ defined as the value of threshold $\beta_1$ 
when the equilibrium arrival rate to Server~1 is $\gamma_1$ can be 
represented as follows.
\begin{eqnarray}
 \beta_1(\gamma_1) & = & \begin{cases} 
 \beta : \int_{\beta}^b\lambda dF(\beta) = \gamma_1 & \mbox{ for } 0\leq \gamma_1 < \gamma^+,\\
 \beta : \int_a^{\beta}\lambda dF(\beta) = \gamma_1 & \mbox{ for } \gamma^+ < \gamma_1 < \lambda.
 \end{cases} 
\end{eqnarray}
Now as seen earlier, $F(\cdot)$ is absolutely 
continuous and strictly increasing in its domain. Further,
the support is $[a,b]$ and hence $F(\cdot)$ is a bijective
function whose inverse exists. In fact
%, from Proposition (9.8.3), \cite{Tao06},
$F^{-1}(\cdot)$ is continuous and strictly increasing in its domain.
The statement of the lemma now follows.
 \end{proof}

\textbf{Lemma \ref{lemma:gfx}} \\
 \begin{proof} Recall our assumption that $D_j(\gamma_j)$ is continuous and
 monotone increasing in
$\gamma_j$ where $j=1,2.$ Since $\gamma_2 = \lambda - \gamma_1,$ 
$(D_2(\lambda - \gamma_1) - D_1(\gamma_1))$  is monotone decreasing in $\gamma_1$ for 
$0 \leq \gamma_1 \leq \lambda.$
%Now consider $\gamma_1$ such that $0 \leq \gamma_1 < \gamma^+.$ 
Recall Eq. \eqref{eq:betagamma} that determines $\beta_1(\gamma_1).$ 
For $0 \leq \gamma_1 < \gamma^+,$ $\beta_1(\gamma_1)$ is continuous
and strictly decreasing. The continuity follows from that of $F^{-1}(\cdot).$ 
Since $F^{-1}(\cdot)$ is strictly increasing in its arguments, 
$F^{-1}\left(\frac{\lambda - \gamma_1}{\lambda}\right) = \beta_1(\gamma_1)$
is decreasing in $\gamma_1.$ Clearly, $g_1(\gamma_1)$ is monotone decreasing when
$\gamma_1$ is such that $0 \leq \gamma_1 < \gamma^+.$ 

When $\gamma_1$ is such that $\gamma^+ <  \gamma_1 \leq \lambda,$ from the 
definition of $\gamma^+,$ we have $(D_2(\lambda - \gamma_1) - D_1(\gamma_1)) < 0.$
In this range of $\gamma_1,$ it can be seen from Eq. \eqref{eq:betagamma} that 
$\beta_1(\gamma_1)$ is continuous and increasing in $\gamma_1$. This again implies that 
$g_1(\gamma_1)$ is continuous decreasing when $\gamma_1$ satisfies 
$\gamma^+ <  \gamma_1 \leq \lambda.$

$g_1(\gamma^+) = 0$ follows from the definition of $\gamma^+$ where $D_1(\gamma^+) =  
D_2(\lambda - \gamma^+).$ The continuity at $\gamma^+$ is obvious
from the fact that $g_1(\gamma^+) = 0$ and $\lim_{\gamma_1 \rightarrow \gamma^+} ~g_1(\gamma_1) = 0.$
\end{proof}

\textbf{Lemma \ref{lemma:threshold}} \\
\begin{proof}
  Suppose $\Delta \geq g_1(0).$ From the definition of $\Delta$ and from
Eq. \eqref{eq:ggamma}, this implies that
\begin{eqnarray*}
c_1 - c_2 &\geq& b \left( D_2(\lambda) - D_1(0) \right) \\
&\geq& \beta \left( D_2(\lambda) - D_1(0) \right) 
\end{eqnarray*}
for all $\beta \in [a,b].$
From the Wardrop equilibrium condition, this implies that $K^W(\beta,\cdot) = \delta_2$ for  
$\beta \in [a,b].$ This implies that $\gamma_1 = 0$ and from
Eq. \eqref{eq:wardrop-cts} we have $\beta_1 = b.$
Similarly when, $\Delta \leq g_1(\lambda)<0$ we have
\begin{eqnarray*}
c_1 - c_2 &\leq& b \left( D_2(0) - D_1(\lambda) \right) \\ 
&\leq & \beta \left( D_2(0) - D_1(\lambda) \right)  
\end{eqnarray*}
where $\beta \in [a,b].$
Again, from the Wardrop equilibrium condition, this implies that
$K^W(\beta,\cdot) = \delta_1$ for  
$\beta \in [a,b].$ Hence $\gamma_1 = \lambda$  
and from Eq. \eqref{eq:wardrop-cts}, we have $\beta_1 = b.$

Now suppose $g_1(\lambda) < \Delta < g_1(0)$ where we know that $g_1(0) > 0$
and $g_1(\lambda) < 0.$
From Lemma \ref{lemma:gfx}, we know that $g_1(\gamma_1)$ is monotonically decreasing
in $\gamma_1.$ Therefore there exists a unique $\gamma$ with $0 < \gamma < \lambda$ 
such that $\Delta = g_1(\gamma).$ This proves the uniqueness of $\gamma_1.$
To see how $\beta_1 = \beta_1(\gamma)$ note that $\Delta = g_1(\gamma)$
implies that
\begin{equation}
 c_1 - c_2 = \beta_1(\gamma)\left(D_2(\lambda - \gamma) - D_1(\gamma) \right).  \nonumber
\end{equation}

Now if $\gamma \leq \gamma^+$ we have $D_2(\lambda - \gamma) > D_1(\gamma).$
In this case,
\begin{equation}
 c_1 - c_2 \leq \beta\left(D_2(\lambda - \gamma) - D_1(\gamma) \right)  \nonumber
\end{equation}
 for $\beta \in [a,\beta_1(\gamma)].$ This means that $K^W(\beta,\cdot) = \delta_2$
 for all $\beta \in [a,\beta_1(\gamma)].$ Similarly,  we have
 \begin{equation}
 c_1 - c_2 \geq \beta \left( D_2(\lambda - \gamma) - D_1(\gamma) \right)
\end{equation}
and $K^W(\beta,\cdot) = \delta_1$ when $\beta \in [\beta_1(\gamma),b].$
Similar arguments hold when $\gamma > \gamma^+$ and hence $\beta_1 = \beta_1(\gamma)$
when $g_1(0) < \Delta < g_1(\lambda).$

From Theorem \ref{thm:wardrop-cts}, $K^W$ is characterized by $\beta_1$ and for a fixed
$\Delta,$ $\beta_1$ is unique. This implies uniqueness of $K^W.$ 
It is important to mention that $K^W$ is unique when $\Delta = 0$ 
because of the assumptions made to ensure $\beta_1(\gamma_1)$ well defined
at $\gamma_1 = \gamma^+$.
\end{proof}

\textbf{Lemma \ref{lemma:gamma^1}}\\
\begin{proof}
Suppose $c_2$ satisfies $c_2 < - g_1(\lambda).$ Assume 
that $c_1 = 0$ so that we have $\Delta > g_1(\lambda).$ From Lemma \ref{lemma:threshold}
this implies that the equilibrium  $\gamma_1$ satisfies 
%existence of a $\hat{\gamma^1(c_2)}$
$g_1(\gamma_1) = \Delta = -c_2$. Let us label this $\gamma_1$ as $\hat{\gamma}.$
% Now consider a $\gamma_1$ (equilibrium arrival rate to Server~1)
% satisfying $\gamma_1 > \gamma^L.$ For this arrival rate to be feasible,
% we require $g_1(\gamma_1) = \Delta < -c2.$ This is because, 
% Now $Delta = c_1 - c_2 < -c_2$ implies $c_1 < 0 $ which is not possible.
Now increase $c_1$ from $c_1 = 0$ by a small $\epsilon > 0$ such that  
there exists $\gamma_1$ that satisfies $\Delta = \epsilon - c_2 = g_1(\gamma_1).$
Now from the monotonicity of $g_1(\cdot)$ it is clear that the equilibrium $\gamma_1$ 
is decreasing as $\Delta$ increases. This implies that
a higher $\Delta$ caused by increasing $c_1$ will only lead to a $\gamma_1$
satisfying $\gamma_1 < \hat{\gamma}.$ Clearly, for any choice of $c_1 \geq 0,$
we have $\gamma_1 \notin [\hat{\gamma},\lambda]$ and hence for this case 
$\gamma^1(c_2) = \hat{\gamma}.$

Now suppose that $- c_2 \leq g_1(\lambda).$ When $c_1 = 0,$ this implies 
$\Delta \leq g_1(\lambda)$ and from Lemma \ref{lemma:threshold} this implies 
$\beta_1 = b$ with the corresponding $\gamma_1$ satisfying $\gamma_1 = \lambda.$
As we increase $c_1,$ the equilibrium $\gamma_1$ decreases and hence $\gamma_1$
satisfies $\gamma_1 \in [0, \lambda].$
The compact representation now follows.
\end{proof} 

\textbf{Lemma \ref{lem:gamma_1^*}} \\
\begin{proof}
To reduce the notations, we represent $\gamma_j^*(c_{j^-})$ by
$\gamma_j^*$ in the proof of the lemma. We shall prove that 
$\gamma_1^* \notin \left\lbrace 0, \gamma^1(c_2) \right\rbrace$
and the proof for 
$\gamma_2^* \notin \left\lbrace 0, \gamma^2(c_1) \right\rbrace$
is along similar lines.
Suppose $\gamma_1^* \in \left\lbrace 0, \lambda \right\rbrace.$
 Then from the requirement 
that $\gamma_1^* = \lambda - \gamma_2^*,$ we have either (1) $\gamma_1^*
= 0$ and $\gamma_2^* = \lambda$ or (2) $\gamma_1^* = \lambda$ and 
$\gamma_2^* = 0.$ First consider the case when  $\gamma_1^*
= 0$ and $\gamma_2^* = \lambda.$ This implies that $R_1(c_1(0),0) = 0$
and hence the revenue made by Server~1 at equilibrium is zero. 
Further since this is an equilibrium, there is no incentive for 
the server to change the admission price and increase its revenue.
We shall now show that this is not true. From Theorem 
\ref{thm:cgamma}, we know that for a given $c_2,$ the admission 
price at Server~$1$ must be at least $c_2 + g_1(0)  > 0.$ Now 
we know that setting $c_1 = c_2$ will result in $\gamma_1 = \gamma^+.$
Now due to the assumption that (1) $D_1(0) < D_2(\lambda)$
and (2)  $D_2(0) < D_1(\lambda),$ there exists an $\epsilon > 0$
such that  setting $c_1 = c_2 + \epsilon$ will result in 
$\gamma_1 \in (0, \gamma^+).$ The revenue earned is non-zero and 
there is clearly an incentive to deviate from any value greater than
$c_2 + g_1(0).$  This implies that $\gamma_1^* = 0$ and $\gamma_2^* = 
\lambda$ is not possible. The proof for $\gamma_1^*
= \lambda$ and $\gamma_2^* = 0$ is along the same lines.
\end{proof}

\bibliographystyle{elsarticle-num}
\bibliography{bodas}

\begin{thebibliography}{10}
\expandafter\ifx\csname url\endcsname\relax
  \def\url#1{\texttt{#1}}\fi
\expandafter\ifx\csname urlprefix\endcsname\relax\def\urlprefix{URL }\fi
\expandafter\ifx\csname href\endcsname\relax
  \def\href#1#2{#2} \def\path#1{#1}\fi

\bibitem{Naor69}
P.~Naor, The regulation of queue size by levying tolls, Econometrica 37 (1969)
  15--24.

\bibitem{Knudsen72}
N.~Edelson, D.~Hilderbrand, Individual and social optimization in a multiserver
  queue with a general cost benefit structure, Econometrica 40 (1972) 515--528.

\bibitem{Yechiali71}
U.~Yechiali, On optimal balking rules and toll charges in the {GI}/{M}/1 queue,
  Operations Research 19 (1971) 349--370.

\bibitem{Larsen98}
C.~Larsen, Investigating sensitivity and the impact of information on pricing
  decisions in an {M}/{M}/1/$\infty$ models, International Journal of
  Production Economics (1998) 365--377.

\bibitem{Chen01}
H.~Chen, M.~Frank, State dependent pricing with a queue, IIE Transactions
  33~(10) (2001) 847--860.

\bibitem{Edelson75}
N.~Edelson, D.~Hilderbrand, Congestion toll for {P}oisson queuing processes,
  Econometrica 43 (1975) 81--92.

\bibitem{Mendelson85}
H.~Mendelson, Pricing services: queueing effects, Communications of the ACM
  28~(5) (1985) 312--321.

\bibitem{Mendelson90}
H.~Mendelson, S.~Whang, Optimal incentive-compatible priority pricing for the
  {M}/{M}/1 queue, Operations Research 38~(5) (1990) 870--883.

\bibitem{Bradford96}
R.~M. Bradford, Incentive compatible pricing and routing policies for
  multi-server queues, European Journal of Operational Research 89 (1996)
  226--236.

\bibitem{Masuda99}
Y.~Masuda, S.~Whang, Dynamic pricing for network service: equilibrium and
  stability, Management Science 45 (1999) 857--869.

\bibitem{Chen04}
H.~Chen, M.~Frank, Monopoly pricing when customers queue, IIE Transactions
  36~(6) (2004) 569--581.

\bibitem{Bodas11b}
T.~Bodas, A.~Ganesh, D.~Manjunath, Load balancing and routing games with
  admission price, in: Proceedings of the {IEEE} {C}onference on {D}ecision and
  {C}ontrol, 2011.

\bibitem{Bodas14}
T.~Bodas, A.~Ganesh, D.~Manjunath, Tolls and welfare optimization for
  multiclass traffic in multiqueue systems, arXiv preprint arXiv:1409.7195.

\bibitem{Luski76}
I.~Luski, On partial equilibrium in a queuing system with two servers, The
  Review of Economic Studies 43 (1976) 519--525.

\bibitem{Levhari78}
D.~Levhari, I.~Luski, Duopoly pricing and waiting lines, European Economic
  Review 11 (1978) 17--35.

\bibitem{Armony03}
M.~Armony, M.~Haviv, Price and delay competition between two service providers,
  European Journal of Operation Research 147 (2003) 32--50.

\bibitem{Chen03}
H.~Chen, Y.~Wan, Price competition of make-to-order firms, IIE Transactions
  35~(9) (2003) 817--832.

\bibitem{Dube08}
P.~Dube, R.~Jain, N-player {B}ertrand-{C}ournot games in queues: {E}xistence of
  equilibrium, in: Proceedings of the 46th {A}nnual {A}llerton {C}onference on
  {C}ommunication, {C}ontrol, and {C}omputing, 2008, 2008, pp. 491--498.

\bibitem{Dube10}
P.~Dube, R.~Jain, Diffserv pricing games in multi-class queueing network
  models, in: Proceedings of {I}nternational {T}eletraffic {C}ongress
  ({ITC}-22), 2010.

\bibitem{Mandjes07}
M.~Mandjes, J.~Timmer, A duopoly model with heterogeneous congestion-sensitive
  customers, European Journal of Operational Research 3 (2007) 445--467.

\bibitem{Allon08}
G.~Allon, A.~Federgruen, Service competition with general queueing facilities,
  Operations Research 56 (2008) 827--849.

\bibitem{Allon07}
G.~Allon, A.~Federgruen, Competition in service industries, Operations Research
  55 (2007) 37--55.

\bibitem{Wardrop52}
J.~G. Wardrop, Some theoretical aspects of road traffic research communication
  networks, Proceedings of Industrial and Civil Engineering 1 (1952) 325--378.

\bibitem{Ayesta11}
U.~Ayesta, J.~Anselmi, A.~Wierman, Competition yields efficiency in load
  balancing games, in: Proceedings of the {IFIP} {Performance}, 2011, pp.
  968--1001.

\bibitem{Osborne03}
M.~J. Osborne, An Introduction To Game Theory, Oxford University Press, USA,
  2003.

\bibitem{Bodas11a}
T.~Bodas, D.~Manjunath, On load balancing equilibria in multiqueue systems with
  multiclass traffic, in: Proceedings of {NETGCOOP}, 2011.

\bibitem{Tao06}
T.~Tao, Analysis (Volume 1), Hindustan Book Agency, 2006.

\end{thebibliography}

\end{document}